\newcommand{\normal}{\lhd}
\newcommand{\eqdef}{\triangleq}
\newcommand{\CC}{\mathbb{C}}
\newcommand{\N}{\mathbb{N}}
\DeclareMathOperator{\positive}{\textbf{P}}
\DeclareMathAlphabet{\varmathbb}{U}{bbold}{m}{n}
\newcommand{\one}{\mathds{1}}
\DeclareMathOperator*{\Exp}{\mathbb{E}}
\DeclareMathOperator{\Res}{Res}
\begin{document}

\newtheorem*{lemma-restatement}{Lemma}
\newtheorem*{theorem-restatement}{Theorem}
\newtheorem{theorem}{Theorem}
\newtheorem{lemma}[theorem]{Lemma}
\newtheorem{claim}[theorem]{Claim}
\newtheorem{proposition}[theorem]{Proposition}
\newtheorem{corollary}[theorem]{Corollary}
\theoremstyle{definition}
\newtheorem{definition}{Definition}
\newtheorem{construction}{Construction}
\newtheorem{fact}[theorem]{Fact}

\newcommand{\eps}{\varepsilon}
\newcommand{\Z}{{\mathbb{Z}}}
\newcommand{\C}{{\mathbb{C}}}
\newcommand{\R}{{\mathbb{R}}}
\newcommand{\F}{{\mathbb{F}}}
\newcommand{\U}{{\textsf{U}}}
\newcommand{\Var}{\mathop\mathrm{Var}}
\renewcommand{\Re}{\mathop\mathrm{Re}}
\newcommand{\frob}[1]{\left\| #1 \right\|_{\rm F}}
\newcommand{\opnorm}[1]{\left\| #1 \right\|_{\rm op}}
\newcommand{\bopnorm}[1]{\bigl\| #1 \bigr\|_{\rm op}}
\newcommand{\norm}[1]{\| #1 \|}
\newcommand{\bnorm}[1]{\bigl\| #1 \bigr\|}
\newcommand{\Bnorm}[1]{\Bigl\| #1 \Bigr\|}
\newcommand{\abs}[1]{\left| #1 \right|}
\newcommand{\inner}[2]{\left\langle #1, #2 \right\rangle}
\newcommand{\poly}{\mathrm{poly}}

\newcommand{\wf}{\widehat{f}}
\newcommand{\wg}{\widehat{g}}
\newcommand{\wG}{\widehat{G}}
\newcommand{\wH}{\widehat{H}}
\newcommand{\wpsi}{\widehat{\psi}}
\newcommand{\wpsidag}{\widehat{\psi^\dagger}}
\newcommand{\GL}{\textsf{GL}}
\newcommand{\fsub}{\operatorname{F}}

\renewcommand{\vec}[1]{\mathbf{#1}}

\title{Small-Bias Sets for Nonabelian Groups:\\ Derandomizing the Alon-Roichman Theorem}

\author{Sixia Chen \and Cristopher Moore \and Alexander Russell}

\maketitle

\abstract{In analogy with $\eps$-biased sets over $\Z_2^n$, we
  construct explicit $\eps$-biased sets over nonabelian finite groups
  $G$.  That is, we find sets $S \subset G$ such that $\norm{\Exp_{x
      \in S} \rho(x)} \le \eps$ for any nontrivial irreducible
  representation $\rho$.  Equivalently, such sets make $G$'s Cayley
  graph an expander with eigenvalue $|\lambda| \le \eps$.  
  The Alon-Roichman theorem shows that random sets of size $O(\log |G| / \eps^2)$ suffice.  
  For groups of the form $G = G_1 \times \cdots \times G_n$, our construction has
  size $\poly(\max_i |G_i|, n, \eps^{-1})$, and we show that
  a set $S \subset G^n$ considered by Meka and Zuckerman that fools
  read-once branching programs over $G$ is also $\eps$-biased in this sense.  
  For solvable groups  whose abelian quotients have constant exponent, we obtain $\eps$-biased sets of size 
  $(\log |G|)^{1+o(1)} \,\poly(\eps^{-1})$.  
  Our techniques include derandomized squaring (in both
  the matrix product and tensor product senses) and a Chernoff-like
  bound on the expected norm of the product of independently random
  operators that may be of independent interest.}

\section{Introduction}

Small-bias sets are useful combinatorial objects for derandomization,
and are particularly well-studied over the Boolean hypercube $\{0,1\}^n$.  Specifically, if we identify the hypercube with the group $\Z_2^n$, then a \emph{character} $\chi$ is a homomorphism from $\Z_2^n$ to $\C$.  We say that a set $S \subseteq \F_2^n$ is \emph{$\eps$-biased} if, for all characters $\chi$,
\[
\abs{ \Exp_{x \in S} \chi(x) } \le \eps \, ,
\]
except for the trivial character $\one$, which is identically equal to $1$.  Since any character of $\F_2^n$ can be written $\chi(x) = (-1)^{k \cdot x}$ where $k \in \Z_2^n$ is the ``frequency vector,''  
this is equivalent to the familiar definition which demands that on any nonzero set of bits, $x$'s parity should be odd or even with roughly equal probability, $(1 \pm \eps)/2$.

It is easy to see that $\eps$-biased sets of size $O(n / \eps^2)$ exist: random sets suffice.  Moreover, several efficient deterministic constructions are
known~\cite{NaorN:Small,AlonBNNR:Construction,DBLP:journals/rsa/AlonGHP92,BenAroyaTS:Constructing} of size polynomial in $n$ and $1/\eps$.  These constructions have been used to derandomize a wide variety of randomized algorithms, replacing random sampling over all of $\{0,1\}^n$ with deterministic sampling on $S$ (see e.g.~\cite{BogdanovV:Pseudorandom}).  In particular, sampling a function on an $\eps$-biased set yields a good estimate of its expectation if its Fourier spectrum has bounded $\ell_1$ norm.

The question of whether similar constructions exist for
nonabelian groups has been a topic of intense interest.  Given a group $G$, a \emph{representation} is a homomorphism $\rho$ from $G$ into the group $\U(d)$ of $d \times d$ unitary matrices for some $d=d_\rho$.  If $G$ is finite, then up to isomorphism there is a finite set $\wG$ of \emph{irreducible} representations, or \emph{irreps} for short, such that any representation $\sigma$ can be written as a direct sum of irreps.  These irreps form the basis for harmonic analysis over $G$, analogous to classic discrete Fourier analysis on abelian groups such as $\Z_p$ or $\Z_2^n$.

Generalizing the standard notion from characters to matrix-valued representations, we say that a set $S \subseteq G$ is \emph{$\eps$-biased} if, for all nontrivial irreps $\rho \in \wG$, 
\[
\Bnorm{\Exp_{x \in S} \rho(x) } \le \eps \, ,
\]
where $\norm{\cdot}$ denotes the operator norm.  There is a natural connection with expander graphs.  If we define a Cayley graph on $G$ using $S$ as a set of generators, then $G$ becomes an expander if and only if $S$ is $\eps$-biased.  Specifically, if $M$ is the stochastic matrix equal to $1/|S|$ times the adjacency matrix, corresponding to the random walk where we multiply by a random element of $S$ at each step, then $M$'s second eigenvalue has absolute value $\eps$.  Thus $\eps$-biased sets $S$ are precisely sets of generators that turn $G$ into an expander of degree $|S|$.

The Alon-Roichman theorem~\cite{AlonR:Random} asserts that a uniformly random set of $O((\log |G|) / \eps^2)$ group elements is $\eps$-biased with high probability.  Thus, our goal is to
derandomize the Alon-Roichman theorem---finding explicit constructions 
of $\eps$-biased sets of size polynomial in $\log |G|$ and $1/\eps$.  (For another notion of derandomizing the Alon-Roichman theorem, in time $\poly(|G|)$, see Wigderson and Xiao~\cite{WigdersonX:Derandomizing}.)

Throughout, we apply the technique of ``derandomized
squaring''---analogous to the principal construction in Rozenman and
Vadhan's alternate proof of Reingold's theorem~\cite{Rozenman:2005fk}
that Undirected Reachability is in \textsf{LOGSPACE}. In particular,
we observe that derandomized squaring provides a generic amplification tool in our setting;  specifically, given a constant-bias set $S$, we can obtain an $\eps$-biased set of size $O(|S| \eps^{-11})$.  We also use a tensor product version of derandomized squaring to build $\eps$-biased sets from $G$ recursively, from $\eps$-biased sets for its subgroups or quotients.

\paragraph{Homogeneous direct products and branching programs} 
Groups of the form $G^n$ where $G$ is fixed have been actively studied by the
  pseudorandomness community as a specialization of the class of
  constant-width branching programs. The problem of fooling  ``read-once'' group programs induces an alternate notion of
  $\eps$-biased sets over groups of the form $G^n$ defined by Meka and
  Zuckerman~\cite{Meka:2009fk}.  Specifically, a read-once branching
  program on $G$ consists of a tuple $\vec{g} = (g_1,\ldots,g_n) \in
  G^n$ and takes a vector of $n$ Boolean variables $\vec{b} =
  (b_1,\ldots,b_n)$ as input.  At each step, it applies $g_i^{b_i}$, i.e., $g_i$ if $b_i=1$ and $1$ if $b_i=0$.  
 They say a set $S \subset G^n$ is $\eps$-biased if, for all
  $\vec{b} \ne 0$, the distribution of $\vec{g}^\vec{b}$ is close to
  uniform, i.e.,
  \begin{equation}\label{prop:MZ}
  \forall h \in G: 
  \abs{
    \Pr_{\vec{g} \in S} \left[ \vec{g}^\vec{b} = h \right] - \frac{1}{|G|} 
  } \le \eps 
  \quad \text{where} \quad 
\vec{g}^\vec{b} = \prod_{i=1}^n g_i^{b_i} \, .
  \end{equation}
  As they comment, there is no obvious relationship
  between this definition and the one we consider.\footnote{In particular,
  there is no obvious way to amplify in their setting: for instance,
  squaring a set $S$ by multiplication in $G^n$ squares the operator
  norm of any representation, but it has a very complicated effect on
  the distribution of $\vec{g}^\vec{b}$.}  We are unable to establish
  such a connection in general.  However, we show in
  Section~\ref{sec:homogeneous} that a particular set shown to have property~\eqref{prop:MZ}
in~\cite{Meka:2009fk} is also $\eps$-biased in our sense; the proof
  is completely different. This yields $\eps$-biased sets of size
  $O(n \cdot \poly(\eps^{-1}))$.

  \paragraph{Inhomogeneous direct products} For the more general case of groups of the form $G = G_1
  \times \cdots \times G_n$, we show that a tensor product adaptation of derandomized squaring yields a recursive
  construction of $\eps$-biased sets of size $\poly(\max_i |G_i|, n,
  1/\eps)$. 

\paragraph{Normal extensions and ``smoothly solvable'' groups}  
Finally, we show that if $G$ is solvable and has abelian quotients of bounded exponent, we can construct $\eps$-biased sets of size $(\log |G|)^{1+o(1)} \,\poly(\eps^{-1})$.  
Here we use the representation theory of solvable groups to build an $\eps$-biased set for $G$ recursively from those for a normal subgroup $H$ and the quotient $G/H$.

\section{An explicit set for $G^n$ with constant $\eps$}
\label{sec:homogeneous}

Meka and Zuckerman~\cite{Meka:2009fk} considered the following construction for fooling
read-once group branching programs:
\begin{definition}\label{def:MZ}
  Let $G$ be a group and $n \in \N$. Then, given an 
  $\eps$-biased set $S$ over $\Z_{|G|}^n$, define
\[
T_S \eqdef\{(g^{s_1},\dots,g^{s_n} )\mid  g\in G, (s_1,\dots,s_n)\in
S\} \, .
\]
\end{definition}

We prove the following theorem, showing that this construction
yields sets of small bias in our sense (and, hence, expander Cayley
graphs over $G^n$).
\begin{theorem}\label{thm:constant}
If $S$ is $\eps$-biased over $\mathbb{Z}_{|G|}^n$ then $T_S$ is $(1
- \Omega(1/\log\log |G|)^2 + \eps)$-biased over $G^n$.
\end{theorem}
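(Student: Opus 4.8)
The plan is to prove the stronger statement that $\bigl\|\Exp_{x}\,\rho(x)\bigr\|\le\tfrac12+\eps$ for every nontrivial irrep $\rho$ of $G^n$, where $\Exp_x$ denotes the average over the natural two-stage distribution that picks $g\in G$ and $(s_1,\dots,s_n)\in S$ independently and uniformly and outputs $(g^{s_1},\dots,g^{s_n})$ --- i.e.\ over the weighted Cayley generating multiset, which is the object behind the expander reformulation in the introduction. Every irrep of $G^n$ is an outer tensor product $\rho=\rho_1\otimes\cdots\otimes\rho_n$ of irreps $\rho_i$ of $G$, with at least one factor, say $\rho_1$, nontrivial; and since $\rho_i(g^{s_i})=\rho_i(g)^{s_i}$, the quantity to bound is
\[
\Bigl\|\Exp_{g\in G}\ \Exp_{s\in S}\ \rho_1(g)^{s_1}\otimes\cdots\otimes\rho_n(g)^{s_n}\Bigr\|.
\]

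First I would diagonalize for each fixed $g$. Choosing an orthonormal eigenbasis of each $\rho_i(g)$ --- whose eigenvalues are $|G|$-th roots of unity because $\rho_i(g)^{|G|}=I$ --- the operator $\bigotimes_i\rho_i(g)^{s_i}$ is diagonal in the induced product basis, with the $(j_1,\dots,j_n)$ entry equal to $\prod_i\omega_{i,j_i}^{s_i}=\psi_{\vec k}(\vec s)$, where $\omega_{i,j_i}=e^{2\pi \mathrm{i}\,k_{i,j_i}/|G|}$ is the $j_i$-th eigenvalue of $\rho_i(g)$ and $\psi_{\vec k}$ is the character of $\Z_{|G|}^n$ with frequency vector $\vec k=(k_{1,j_1},\dots,k_{n,j_n})$. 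Averaging over $S$: the entry equals $1$ exactly when $\vec k=\vec0$, i.e.\ when every $\omega_{i,j_i}=1$, and otherwise $\psi_{\vec k}$ is a nontrivial character of $\Z_{|G|}^n$ and the entry has modulus at most $\eps$ by the $\eps$-bias of $S$. The indices with $\vec k=\vec 0$ assemble exactly into $\bigotimes_i Q_i(g)$, where $Q_i(g)$ is the orthogonal projection onto $\ker(\rho_i(g)-I)$, so $\Exp_{s\in S}\bigotimes_i\rho_i(g)^{s_i}=\bigotimes_i Q_i(g)+E(g)$ with $\|E(g)\|\le\eps$. Averaging over $g$ and using the triangle inequality, it remains to bound $\bigl\|\Exp_{g\in G}\bigotimes_i Q_i(g)\bigr\|$.

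The next step reduces to a single tensor factor. Since each $Q_i(g)$ is a projection, $\bigotimes_i Q_i(g)$ is positive and $\bigotimes_i Q_i(g)\preceq Q_1(g)\otimes I\otimes\cdots\otimes I$, hence $\Exp_g\bigotimes_i Q_i(g)\preceq(\Exp_g Q_1(g))\otimes I\otimes\cdots$ and $\bigl\|\Exp_g\bigotimes_i Q_i(g)\bigr\|\le\bigl\|\Exp_{g\in G} Q_1(g)\bigr\|$. The crux is then the single-factor estimate: for any nontrivial irrep $\rho$ of $G$, writing $Q(g)$ for the projection onto $\ker(\rho(g)-I)$, one has $\|\Exp_{g\in G} Q(g)\|\le\tfrac12$. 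To see this, let $v$ be a unit top eigenvector of the positive operator $\Exp_g Q(g)$, with eigenvalue $\mu$; then $\Exp_g\|(I-Q(g))v\|^2=1-\mu$. On the other hand, $\rho(g)Q(g)v=Q(g)v$ gives $\rho(g)v-v=(\rho(g)-I)(I-Q(g))v$, so $\|\rho(g)v-v\|^2\le 4\|(I-Q(g))v\|^2$ and $\Exp_g\|\rho(g)v-v\|^2\le 4(1-\mu)$; but $\Exp_g\|\rho(g)v-v\|^2=2-2\Re\langle v,(\Exp_{g\in G}\rho(g))v\rangle=2$ because $\Exp_{g\in G}\rho(g)=0$ for the nontrivial irrep $\rho$. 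Hence $2\le 4(1-\mu)$, i.e.\ $\mu\le\tfrac12$. Combining everything yields $\|\Exp_x\rho(x)\|\le\tfrac12+\eps$, which establishes the theorem with room to spare.

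The step I expect to take the most care is reconciling this two-stage average with the \emph{uniform} distribution over the set $T_S$ of Definition~\ref{def:MZ}: the map $(g,s)\mapsto(g^{s_1},\dots,g^{s_n})$ is far from injective (for instance $g=1$ collapses all of $S$ to a single point), so $|T_S|$ can be strictly smaller than $|G|\cdot|S|$ and uniform-over-$T_S$ need not equal the two-stage average. I would handle this either by carrying the weighted Cayley multiset throughout --- which is anyway the natural object for the expander statement --- or by arguing that the fibers of the map are nearly balanced: for each $g$ the fiber structure depends only on the reduction of $S$ modulo $\operatorname{ord}(g)$, and that reduction is still $\eps$-biased over $\Z_{\operatorname{ord}(g)}^n$, which keeps the induced distribution close enough to uniform that the discrepancy is absorbed into $\eps$. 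Apart from that, the argument uses only that irreps of $G^n$ are outer tensor products of irreps of $G$ and that a nontrivial irrep of $G$ averages to $0$.
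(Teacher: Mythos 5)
Your proof is correct, and it reaches the conclusion by a genuinely different --- and quantitatively stronger --- route than the paper. The first stage (diagonalizing each $\rho_i(g)$ for fixed $g$, recognizing the diagonal entries of $\bigotimes_i\rho_i(g)^{s_i}$ as characters of $\Z_{|G|}^n$, and splitting $\Exp_{\bar s\in S}\bigotimes_i\rho_i(g)^{s_i}$ into the projection $\bigotimes_i\Pi_g^{\rho_i}$ plus an error of norm at most $\eps$) coincides with the paper's decomposition into $\Pi_g^{\bar\rho}+E_g^{\bar\rho}$. You diverge at the remaining task of bounding $\bnorm{\Exp_{g}\Pi_g^{\bar\rho}}$. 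The paper raises this Hermitian operator to a power $k$, uses the triangle inequality to pass to $\Exp_{g_1,\ldots,g_k}\bnorm{\Pi_{g_1}^{\rho_j}\cdots\Pi_{g_k}^{\rho_j}}$, and invokes the operator-valued martingale bound of Theorem~\ref{thm:operator-avg} together with Lemma~\ref{lem:power-avg} (which gives $\bnorm{\Exp_g\Pi_g^{\rho}}\le 1-\phi(|G|)/|G|$ via the bijectivity of $x\mapsto x^k$ when $\gcd(k,|G|)=1$); the powering costs a square, producing the exponent $2$ in $1-\Omega(1/\log\log|G|)^2$. You instead use operator monotonicity: since $\bigotimes_{i\neq j}\Pi_g^{\rho_i}\preceq\one$, you get $\Pi_g^{\bar\rho}\preceq\Pi_g^{\rho_j}\otimes\one$, hence after averaging $\bnorm{\Exp_g\Pi_g^{\bar\rho}}\le\bnorm{\Exp_g\Pi_g^{\rho_j}}$, with no powering and no concentration inequality. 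Your variational single-factor bound $\bnorm{\Exp_g\Pi_g^{\rho}}\le 1/2$ (from $\Exp_g\norm{\rho(g)v-v}^2=2$ and $\rho(g)v-v=(\rho(g)-\one)(\one-\Pi_g^{\rho})v$) is also correct and strictly improves Lemma~\ref{lem:power-avg}; it is tight, e.g., for a nontrivial character of $\Z_2^n$. The net result is a bias bound of $\tfrac12+\eps$, which implies the stated theorem and would in fact streamline the downstream use in Theorem~\ref{thm:homogeneous}, where the paper must take $\delta\approx 1/\log\log|G|$ to cope with its weaker constant. The one caveat you flag --- that $T_S$ must be read as a multiset (equivalently, as the two-stage distribution), since the map $(g,\bar s)\mapsto g^{\bar s}$ has unbalanced fibers --- is shared with the paper's own proof, which silently makes the same identification, so it is not a gap relative to the paper.
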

\noindent
Anticipating the proof, we set down the following definition.
\begin{definition}
  Let $G$ be a finite group. For a representation $\rho \in
  \widehat{G}$ and a subgroup $H$, define
  \[
  \Pi_H^\rho \triangleq 
  \Exp_{h \in H} \rho(h)
  \]
  to be the projection operator induced by the subgroup $H$ in $\rho$.
  In the case where $H=\langle g \rangle$ is the cyclic 
  group generated by $g$, we use the following shorthand:
  \[
  \Pi^\rho_g = \Pi^\rho_{\langle g \rangle} \, . 
  \]
Finally, for groups of the form $G^n$ we use the following convention.  Recall that any irreducible representation $\bar \rho \in \widehat{G^n}$ is a tensor product, $\bar \rho = \bigotimes_{i=1}^n \rho_i$ where $\rho_i \in \wG$ for each $i$.  That is, if $\bar{g} = (g_1,\ldots,g_n)$, then $\bar \rho(\bar{g}) = \bigotimes_{i=1}^n \rho_i(g_i)$.  Then for an element $g \in G$, we write
  \begin{equation}\label{def:proj}
  \Pi_g^{\bar \rho} \triangleq \Pi_{\langle g\rangle^n}^{\bar \rho} =
  \bigotimes_{i=1}^n \Pi^{\rho_i}_g 
  \end{equation}
  for the projection operator determined by the abelian subgroup
  $\langle g \rangle^n$.
\end{definition}

\begin{lemma}\label{lem:power-avg}
  Let $G$ be a finite group and $\rho$ a nontrivial irreducible
  representation of $G$. Then
  \[
  \Bnorm{ \Exp_{g \in G}   \Pi_g^\rho } 
  \leq 1 - \frac{\phi(|G|)}{|G|} \leq 1 - \Omega\left(\frac{1}{\log \log |G|} \right) \, ,
  \]
  where $\phi(\cdot)$ denotes the Euler totient function.
\end{lemma}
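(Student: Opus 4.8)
The plan is to realize $\Exp_{g\in G}\Pi_g^\rho$ as the image under $\rho$ of a probability distribution on $G$ that contains a sizable \emph{uniform} component, and then to use that a nontrivial irreducible representation annihilates the uniform distribution. Expanding each cyclic average as $\Pi_g^\rho=\Exp_{x\in\langle g\rangle}\rho(x)$ and interchanging the two sums gives
\[
\Exp_{g\in G}\Pi_g^\rho \;=\; \sum_{x\in G} p(x)\,\rho(x),
\qquad
p(x)\eqdef \frac{1}{|G|}\sum_{g\,:\,x\in\langle g\rangle}\frac{1}{|\langle g\rangle|}\, ,
\]
where the inner sum ranges over all $g\in G$ whose cyclic subgroup contains $x$. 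One checks at once that $p$ is a probability distribution on $G$: it is nonnegative, and interchanging sums again, $\sum_x p(x)=\frac{1}{|G|}\sum_{g\in G}\frac{1}{|\langle g\rangle|}\,|\langle g\rangle| = 1$.

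The crux is the pointwise bound $p(x)\ge \phi(|G|)/|G|^2$ for \emph{every} $x\in G$. To prove it, discard from the sum defining $p(x)$ every term except those with $\langle g\rangle=\langle x\rangle$; there are exactly $\phi(|\langle x\rangle|)$ such $g$, and each satisfies $|\langle g\rangle|=|\langle x\rangle|$, so $p(x)\ge \phi(|\langle x\rangle|)/\bigl(|G|\cdot|\langle x\rangle|\bigr)$. Now $|\langle x\rangle|$ divides $|G|$ by Lagrange's theorem, and whenever $a\mid b$ one has $\phi(a)/a=\prod_{p\mid a}(1-1/p)\ge\prod_{p\mid b}(1-1/p)=\phi(b)/b$, because the primes dividing $a$ form a subset of those dividing $b$ and each factor lies in $(0,1)$. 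Hence $\phi(|\langle x\rangle|)/|\langle x\rangle|\ge \phi(|G|)/|G|$, and the claim follows.

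Now set $\eta\eqdef \phi(|G|)/|G|\in(0,1)$ (note $|G|\ge 2$ since $G$ has a nontrivial irrep) and let $u$ be the uniform distribution on $G$. The pointwise bound says $p-\eta u\ge 0$ coordinatewise, so $p=\eta u+(1-\eta)q$ for the probability distribution $q\eqdef(p-\eta u)/(1-\eta)$. For a nontrivial irrep $\rho$ we have $\sum_x u(x)\rho(x)=\Pi_G^\rho=0$, as $\rho$ has no nonzero $G$-invariant vector. Therefore
\[
\Exp_{g\in G}\Pi_g^\rho \;=\; \eta\,\Pi_G^\rho + (1-\eta)\sum_{x}q(x)\rho(x)\;=\;(1-\eta)\sum_{x}q(x)\rho(x)\, ,
\]
and since $\opnorm{\rho(x)}=1$, the triangle inequality yields
\[
\Bnorm{\Exp_{g\in G}\Pi_g^\rho}\;\le\;(1-\eta)\sum_x q(x)\,\opnorm{\rho(x)}\;=\;1-\eta\;=\;1-\frac{\phi(|G|)}{|G|}\, .
\]
The second inequality of the statement is then just the classical estimate $\phi(m)=\Omega\!\left(m/\log\log m\right)$ for Euler's totient.

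I expect the only genuinely delicate point to be the pointwise estimate in the second step: recognizing that it suffices to retain only the generators of $\langle x\rangle$ in the sum defining $p(x)$, and that Lagrange's theorem ($|\langle x\rangle|\mid|G|$) is exactly what licenses the comparison $\phi(|\langle x\rangle|)/|\langle x\rangle|\ge\phi(|G|)/|G|$. Everything else is bookkeeping; in particular this route uses nothing about the representation theory of $G$ beyond the vanishing $\Pi_G^\rho=0$, and no appeal to Schur's lemma or to the structure of $G$ is needed.
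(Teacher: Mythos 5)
Your proof is correct and yields exactly the paper's bound, but by a somewhat different route. The paper conditions on the exponent: writing $\Exp_{g}\Pi_g^\rho = \Exp_{t\in\Z_{|G|}}\Exp_{g}\rho(g^t)$, it observes that for the $\phi(|G|)/|G|$ fraction of exponents $t$ coprime to $|G|$ the map $g\mapsto g^t$ is a bijection of $G$, so $\Exp_g\rho(g^t)=\Exp_g\rho(g)=0$, and it bounds the remaining terms crudely by $1$. You instead pass to the marginal distribution $p$ of $g^t$ on $G$ and prove the pointwise domination $p\ge\eta u$ with $\eta=\phi(|G|)/|G|$, by retaining only the generators of $\langle x\rangle$ and combining Lagrange's theorem with the monotonicity $\phi(a)/a\ge\phi(b)/b$ for $a\mid b$. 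Both arguments isolate a uniform component of weight $\phi(|G|)/|G|$ that the nontrivial irrep annihilates; the paper's conditioning gets there in two lines from the bijectivity of coprime power maps, while yours makes the underlying distributional fact explicit (the one-step distribution of this walk dominates a multiple of the uniform distribution, a statement independent of any representation) at the cost of an extra number-theoretic lemma. All of your individual steps check out: the count $\phi(|\langle x\rangle|)$ of generators of $\langle x\rangle$, the totient monotonicity along divisibility, the convex splitting $p=\eta u+(1-\eta)q$ (legitimate since $\eta<1$ for $|G|\ge 2$), the vanishing $\Pi_G^\rho=0$ for nontrivial $\rho$, and the final appeal to the classical lower bound on $\phi$.
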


\begin{proof}
  Expanding the definition of $\Pi_{\langle g\rangle}^\rho$, we have
  \[
  \Bnorm{ \Exp_{g \in G}   \Pi_g^\rho } 
  = \Bnorm{ \Exp_{g \in G} \Exp_{t \in \Z_{|G|}} \rho(g^t) } 
  \leq \Exp_{t \in \Z_{|G|}} \Bnorm{ \Exp_g \rho(g^t) } \, .
  \]
  Recall that the function $x \mapsto x^k$ is a bijection in any
  group $G$ for which $\gcd(|G|, k) = 1$.  Moreover, for such $k$, $\Exp_g
  \rho(g^k) = \Exp_g \rho(g) = 0$ as $\rho \neq 1$. Assuming
  pessimistically that $\norm{\Exp_g \rho(g^k)} = 1$ for all other
  $k$ yields the bound $\norm{ \Exp_{g \in G} \Pi_{\langle g\rangle}^\rho } \leq
  1 - \phi(|G|)/|G|$ promised in the statement of the lemma. The
  function $\phi(n)$ has the property that
  \[
  {\phi(n)} > \frac{n}{e^\gamma \log\log n + \frac{3}{\log \log n}}
  \]
  for $n > 3$, where $\gamma \approx .5772\ldots$ is the Euler
  constant~\cite{RosserS1962}; this yields the second estimate in the
  statement of the lemma.
\end{proof}

Our proof will rely on the following tail bound for products of
operator-valued random variables, proved in Appendix~\ref{app:tail}.

\begin{theorem}\label{thm:operator-avg}
  Let $\positive(H)$ denote the cone of positive operators on the
  Hilbert space $H$.  Let $P_1,\ldots, P_k$ be independent random
  variables taking values in $\positive(H)$ for which $\norm{ P_i } \le 1$ 
  and $\bnorm{ \Exp[P_i] } \leq 1 - \delta$.
  Then 
  \[
  \Pr\left[ \bnorm{ P_k \cdots P_1} \geq
    \sqrt{\dim H} \exp\left(-\frac{k\delta}{6}\right)\right] \leq
  \dim H \cdot \exp\left(-\frac{k\delta^2}{13} \right) \, . \qedhere
  \]
\end{theorem}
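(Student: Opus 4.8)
The plan is to trade the operator norm for the Frobenius norm---where the independence of the $P_i$ can be exploited cleanly---and then finish with Markov's inequality. Write $M = P_k \cdots P_1$, and note $\opnorm{M} \le \frob{M}$ with $\frob{M}^2 = \tr(MM^\dagger)$, so it suffices to bound $\Exp[\tr(MM^\dagger)]$. Since each $P_i$ is positive, hence self-adjoint, $M^\dagger = P_1 \cdots P_k$, and writing $M = P_k M'$ with $M' = P_{k-1}\cdots P_1$ gives $\tr(MM^\dagger) = \tr(P_k M'M'^\dagger P_k) = \tr(M'M'^\dagger P_k^2)$ by cyclicity.

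The crux is a twofold use of operator monotonicity together with independence. First, $0 \preceq P_i \preceq I$ forces $P_i^2 \preceq P_i$, so $\Exp[P_i^2] \preceq \Exp[P_i] \preceq (1-\delta)I$, using positivity of $\Exp[P_i]$ and $\bnorm{\Exp[P_i]} \le 1-\delta$. Second, for $0 \preceq A \preceq cI$ and any operator $X$, $\tr(X^\dagger A X) = \tr(AXX^\dagger) \le c\,\tr(XX^\dagger)$, because $XX^\dagger \succeq 0$ and the trace of a product of two positive operators is nonnegative. Now peel off the outermost factor: $P_k$ is independent of $M'$, so conditioning on $M'$,
\[
\Exp[\tr(MM^\dagger)] = \Exp_{M'}\!\left[\tr\!\big(M'M'^\dagger\,\Exp[P_k^2]\big)\right] \le (1-\delta)\,\Exp[\tr(M'M'^\dagger)].
\]
Iterating this $k$ times down to the empty product $I$ (with $\tr I = \dim H$) yields $\Exp[\frob{P_k\cdots P_1}^2] \le (1-\delta)^k \dim H$.

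It then remains to apply Markov: for any $s>0$,
\[
\Pr\!\left[\opnorm{P_k\cdots P_1} \ge s\right] \le \Pr\!\left[\frob{P_k\cdots P_1}^2 \ge s^2\right] \le \frac{(1-\delta)^k \dim H}{s^2}.
\]
Choosing $s = \sqrt{\dim H}\exp(-k\delta/6)$ makes the $\dim H$ cancel, leaving $(1-\delta)^k \exp(k\delta/3) \le \exp(-2k\delta/3)$ via $1-\delta \le e^{-\delta}$; and since $\delta \le 1$ this is at most $\dim H\cdot\exp(-k\delta^2/13)$, which is the stated bound (in fact somewhat stronger).

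I do not expect a real obstacle here. The only idea that is not bookkeeping is passing to the second moment of the Frobenius norm, which linearizes a quantity---$\opnorm{P_k\cdots P_1}$---that cannot be bounded factor-by-factor because the $P_i$ need not commute; positivity of the $P_i$ (not merely $\opnorm{P_i}\le 1$) is exactly what makes the peeling step valid, via $P_i^2 \preceq P_i$ and the nonnegativity of $\tr(AB)$ for positive $A,B$. The one point that takes a little care is checking that Markov alone suffices: it does, since the expectation is already exponentially small, so no martingale or matrix-Chernoff refinement is needed---though a $p$-th moment version of the same computation would remove the $\dim H$ prefactor if one wanted a cleaner tail in other regimes.
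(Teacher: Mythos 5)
Your proof is correct, and it takes a genuinely different route from the paper's.

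The paper's argument is a martingale argument: for a fixed unit vector $\vec{v}$, it tracks the multiplicative norm-loss ratios $\ell_i = \|P_i\cdots P_1 \vec v\|/\|P_{i-1}\cdots P_1\vec v\|$, truncates them via $m_i=\max(\ell_i,1/2)$ to make $\ln m_i$ bounded, applies Azuma's inequality to the sum $\sum_i\ln m_i$, and then union-bounds over an orthonormal basis of $H$ (whence the $\dim H$ prefactor and the Frobenius-style comparison). Your argument bypasses the martingale entirely: it computes the \emph{expectation} of $\frob{P_k\cdots P_1}^2$ exactly via the peeling identity $\tr(MM^\dagger)=\tr(M'M'^\dagger P_k^2)$, the operator inequality $\Exp[P_k^2]\preceq\Exp[P_k]\preceq(1-\delta)I$, and the trace monotonicity $\tr(AB)\le c\,\tr(A)$ for $A\succeq 0$, $B\preceq cI$. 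Then a single application of Markov finishes. This is simpler and more robust---it uses no concentration machinery at all, only linearity of expectation and positivity---and it yields a \emph{strictly stronger} bound: you obtain $\Pr\bigl[\opnorm{P_k\cdots P_1}\ge\sqrt{\dim H}\,e^{-k\delta/6}\bigr]\le e^{-2k\delta/3}$, which has no $\dim H$ prefactor and has $\delta$ rather than $\delta^2$ in the exponent, and which dominates the stated right-hand side for all $\delta\le 1$. One small remark: the positivity assumption on the $P_i$ is indeed essential to your peeling step (as you note, it is what gives both $P_i^2\preceq P_i$ and the nonnegativity $\tr(AB)\ge 0$ for positive $A,B$); the paper uses it as well, albeit more implicitly, in bounding $\Exp[\ell_i]$. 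It is worth observing that your computation gives an exact intermediate result---$\Exp[\frob{P_k\cdots P_1}^2]\le(1-\delta)^k\dim H$---that the paper's Azuma-based route does not expose, and which is the natural place to start if one later wants sharper tails via higher moments.
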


We return to the proof of Theorem~\ref{thm:constant}.

\begin{proof}[Proof of Theorem~\ref{thm:constant}]
For a non-trivial irrep $\bar \rho=\rho_1 \otimes \cdots \otimes \rho_n \in
\wG^n$, we write
\[
\Exp_{\bar t\in T_S}\;\bar \rho(\bar t) 
= \Exp_{g \in G} \Exp_{\bar s \in S}\;\bar\rho(g^{\bar s}) 
= \Exp_{g \in G} \Exp_{\bar s \in S}\;\left( \Res_{\langle g \rangle^n} \bar\rho \right)(g^{\bar s}) \, ,
\]
where $\bar s=(s_1, \dots, s_n)$, $g^{\bar s}=(g^{s_1},\dots,
g^{s_n})$, and $\Res_H \bar{\rho}$ denotes the restriction of $\bar{\rho}$ to the subgroup $H \subseteq G^n$. 
For a particular $g \in G$, we decompose the restricted representation $\Res_{\langle g \rangle^n} \bar\rho$ 
into a direct sum of irreps of the abelian group $\langle g \rangle^n \cong \Z_{|\langle g \rangle|}^n$.  This yields
\[
\Res_{\langle g\rangle^n} \bar\rho 
= \bigoplus_{\chi \in \widehat{\langle g \rangle^n}} \chi^{\oplus a_\chi} \, , 
\]
where each $\chi$ is a one-dimensional representation of the cyclic 
group $\langle g\rangle^n$ and $a_\chi$ denotes the multiplicity with which $\chi$ appears in the
decomposition. 

Now, as $S$ is an $\eps$-biased set over $\Z_{|G|}^n$, its
quotient modulo any divisor $d$ of $|G|$ is
$\eps$-biased over $\Z_d^n$. It follows that 
\[
\left| \Exp_{\bar s \in S} \chi(\bar s) \right| \leq \eps
\]
for any nontrivial $\chi$; when $\chi$ is trivial, the expectation is 1.  
Thus for any fixed $g \in G$ we may write
\[
\Exp_{\bar s\in S}\; \left( \Res_{\langle g \rangle^n} \bar\rho\right)(g^{\bar s}) 
= \Pi_{g}^{\bar \rho} + E_{g}^{\bar \rho} \, .
\]
Recall that $\Pi_g^{\bar \rho}$ is the projection operator onto the space associated with the copies of the trivial representation of 
$\langle g \rangle^n$ in $\Res_{\langle g\rangle^n} \bar\rho$, i.e., the expectation we would obtain if $\bar{s}$ ranged over all of 
$\langle g \rangle^n$ instead over just $S$.  The ``error operator'' $E_{g}^{\bar\rho}$ arises from the nontrivial representations 
of $\langle g \rangle^n$ appearing in $\Res_{\langle g\rangle^n} \bar\rho$, and has operator norm bounded by $\eps$.  
It follows that
\begin{align*}
\Bnorm{ \Exp_{\bar t \in T} \bar\rho(\bar t) } 
&= \Bnorm{ \Exp_{g \in G} \left(\Exp_{\bar s \in S}  \bar\rho(g^{\bar s})\right) }
= \Bnorm{ \Exp_{g \in G} \left( \Pi_g^{\bar\rho} + E_g^{\bar\rho}\right) } \\
& \leq \Bnorm{ \Exp_{g \in G} \Pi_g^{\bar\rho} } 
+ \Bnorm{ \Exp_{g \in G} E_g^{\bar\rho} } 
\leq \Bnorm{ \Exp_{g \in G} \Pi_g^{\bar\rho} } + \eps \, ,
\end{align*}
and it remains to bound $\norm{ \Exp_{g\in G} \Pi^{\bar \rho}_g }$.

As $\Exp_{g\in G} \Pi^{\bar \rho}_g$ is Hermitian, for any positive $k$ we have
\begin{equation}
\label{eq:power}
\Bnorm{ \Exp_{g\in G} \Pi^{\bar \rho}_g } 
= \sqrt[k]{\left\| \left(\Exp_{g\in G} \Pi^{\bar \rho}_g\right)^k \right\| } \, ,
\end{equation}
so we focus on the operator $\left(\Exp_{g\in G} \Pi^{\bar \rho}_g\right)^k$.
Expanding $\Pi^{\bar \rho}_g = \bigotimes_i \Pi^{\rho_i}_g$, we may
write
\begin{equation}\label{eq:expansion}
\left(\Exp_{g\in G} \Pi^{\bar \rho}_g\right)^k = \Exp_{g_1, \ldots,
  g_k} \left[ \Pi_{g_1}^{\bar \rho} \cdots \Pi_{g_k}^{\bar \rho}\right] = \Exp_{g_1, \ldots,
  g_k} \left[\bigotimes_{i=1}^n \Pi_{g_1}^{\rho_i} \cdots \Pi_{g_k}^{\rho_i}\right]  \, .
\end{equation}
As $\bar\rho$ is nontrivial, there is some coordinate $j$ for which
$\rho_j$ is nontrivial.  Combining~\eqref{eq:expansion} with the fact
that $\norm{A \otimes B} = \norm{A} \norm{B}$, we conclude that
\begin{equation}\label{eq:triangle}
\left\| \left(\Exp_{g\in G} \Pi^{\bar \rho}_g \right)^k \right\| 
\leq \Exp_{g_1, \ldots, g_k} \left\| \bigotimes_{i=1}^n \Pi_{g_1}^{\rho_i} \cdots \Pi_{g_k}^{\rho_i} \right\|
\leq \Exp_{g_1,\ldots, g_k} \left\| \Pi^{\rho_j}_{g_1} \cdots \Pi^{\rho_j}_{g_k} \right\| \, .
\end{equation}
Lemma~\ref{lem:power-avg} asserts that $\norm{\Exp_g \Pi_g^{\rho_j} } \leq 1 - \delta_G$, 
where $\delta_G = \Omega(1 / \log \log |G|)$. It follows then from
Theorem~\ref{thm:operator-avg} that
\begin{equation}\label{eq:tail-bound}
\Pr_{g_1, \ldots, g_k} \Biggl[\underbrace{\Bnorm{ \Pi^{\rho_j}_{g_1} \cdots \Pi^{\rho_j}_{g_k} } 
\geq \sqrt{d_j} \exp(-k \delta_G/6)}_{(\ddag)} \Biggr] \leq d_j \cdot \exp\left(-{k \delta_G^2}/{13}\right) \, ,
\end{equation}
where $d_j = \dim \rho_j$. This immediately provides a bound on $\norm{ (\Exp_g \Pi_g^{\bar \rho})^k }$.  Specifically,
combining~\eqref{eq:triangle} with~\eqref{eq:tail-bound}, let us pessimistically assume that 
$\norm{ \Pi^{\rho_j}_{g_1}\cdots \Pi^{\rho_j}_{g_k} } = d_j \exp(-k\delta_G/6)$
for tuples $(g_1, \ldots, g_k)$ that do not enjoy property $(\ddag)$, and $1$ for tuples that do. Then
 \begin{align*}
 \left\| \left(\Exp_{g\in G} \Pi^{\bar \rho}_g\right)^k \right\|
 & \leq \Exp_{g_1,\ldots, g_k} \bnorm{ \Pi^{\rho_j}_{g_1} \ldots \Pi^{\rho_j}_{g_k} } \\
 &\leq d_j \exp\left(-{k
 \delta_G^2}/{13}\right) + \left(1 - d_j \exp\left(-{k
 \delta_G^2}/{13}\right)\right) \sqrt{d_j} \exp\left(-{k
   \delta_G}/{6}\right)\\
 &\leq 2d_j \exp\left(-{k \delta_G^2}/{13}\right) \, , 
 \end{align*}
and hence
\[
\left\| \Exp_{g \in G} \Pi_g^{\bar \rho}\right\|
 \leq \inf_k \left( \sqrt[k]{2d_j}\right) \cdot \exp(-{\delta_G^2}/{13}) 
 = 1 - \Omega\left({1}/{\log\log |G|}\right)^2 \, , 
\]
where we take the limit of large $k$.
\end{proof}

\section{Derandomized squaring and amplification} 
\label{sec:amplification}

In this section we discuss how to amplify $\eps$-biased sets in a generic way.  Specifically, we use derandomized squaring to prove the following.  
\begin{theorem}
\label{thm:amplify}
Let $G$ be a group and $S$ an $1/10$-biased set on $G$.  Then for any $\eps > 0$, there is an $\eps$-biased set $S_\eps$ on $G$ of size $O(|S| \eps^{-11})$.  Moreover, assuming that multiplication can be efficiently implemented in $G$, the set $S_\eps$ can be constructed from $S$ in time polynomial in $|S_\eps|$.
\end{theorem}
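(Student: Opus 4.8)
The plan is to adapt Rozenman and Vadhan's derandomized squaring~\cite{Rozenman:2005fk} to the Cayley-graph setting, where it becomes an especially clean amplification operation because Cayley graphs are vertex-transitive. Fix an enumeration $S=(s_1,\dots,s_m)$ of a $\mu$-biased set, and let $H$ be an $N$-regular graph on the vertex set $[m]$. Define the \emph{derandomized square}
\[
S\circ H \;\eqdef\; \bigl\{\, s_i s_j \;:\; i\in[m],\ j\text{ adjacent to }i\text{ in }H \,\bigr\},
\]
a multiset of size $mN$. There is no need for a rotation map: we only care about the distribution of $S\circ H$, so the fact that $S$ may fail to be symmetric is irrelevant, and symmetrizing the final output at the end costs only a factor of two and does not change the bias.

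The technical core is a squaring lemma: \emph{if $S$ is $\mu$-biased over $G$ and every eigenvalue of $\tfrac1N A_H$ other than the Perron eigenvalue has absolute value at most $\nu$, then $S\circ H$ is $(\mu^2+\nu)$-biased.} To prove it, fix a nontrivial irrep $\rho$, observe that $\Exp_{x\in S\circ H}\rho(x)=\tfrac1m\sum_{i,j}(\tfrac1N A_H)_{ij}\,\rho(s_i)\rho(s_j)$, and split $\tfrac1N A_H=\tfrac1m J+F$ with $J$ all-ones and $\opnorm F\le\nu$. The $J$-term contributes $\bigl(\Exp_{s\in S}\rho(s)\bigr)^2$, of norm at most $\mu^2$. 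For the $F$-term, for unit vectors $u,v$ one writes the matrix entry $\bigl\langle u,\ \tfrac1m\sum_{ij}F_{ij}\rho(s_i)\rho(s_j)\,v\bigr\rangle=\tfrac1m\langle a,(F\otimes I)\,b\rangle$, where $a=\sum_i e_i\otimes\rho(s_i)^\dagger u$ and $b=\sum_j e_j\otimes\rho(s_j)v$; since the $\rho(s_i)$ are unitary, $\norm a=\norm b=\sqrt m$, so this is at most $\tfrac1m\,\opnorm F\,\norm a\,\norm b\le\nu$. Summing the two contributions proves the lemma.

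With the lemma in hand the amplification is an iteration. Let $\{H^{(n)}_\nu\}$ be a standard explicit family of $\Delta(\nu)$-regular expanders with spectral expansion $\le\nu$ that is available on every vertex count $n$, with $\Delta(\nu)=\poly(1/\nu)$. Put $S_0=S$, $\mu_0=1/10$, and recursively $S_{i+1}=S_i\circ H^{(|S_i|)}_{\nu_i}$ with $\nu_i\eqdef\mu_i^2$; the lemma gives $\mu_{i+1}\le\mu_i^2+\nu_i=2\mu_i^2$, hence $2\mu_i\le(1/5)^{2^i}$, so $\mu_k\le\eps$ after $k=\log_2\log(1/\eps)+O(1)$ rounds. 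Output $S_\eps\eqdef S_k$. Its size is $|S|\prod_{i<k}\Delta(\nu_i)=|S|\prod_{i<k}\poly(1/\mu_i)$, and although $k$ grows and an individual factor may be as large as $\poly(1/\eps^2)$, the product is still polynomial in $1/\eps$ because the $\mu_i$ shrink doubly-exponentially: $\log\prod_{i<k}\Delta(\nu_i)=O\bigl(\sum_{i<k}\log(1/\mu_i)\bigr)=O\bigl(\sum_{i<k}2^i\bigr)=O(2^k)=O(\log(1/\eps))$. Tracking the constants in $\Delta(\cdot)$ and in the recursion gives the stated size $O(|S|\eps^{-11})$. For efficiency, each round builds an explicit expander on $|S_i|$ vertices and forms at most $|S_{i+1}|$ products in $G$, so the construction runs in time $\poly(|S_\eps|)$ whenever multiplication in $G$ is efficient.

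The squaring lemma is short once one sees the trick — writing the error contribution as a bilinear form against $F\otimes I$ and invoking unitarity of the $\rho(s_i)$ to normalize $a$ and $b$ — so I expect it to be routine. The step that actually requires care is the parameter bookkeeping: one must resist driving $\nu_i$ far below $\mu_i^2$, which would blow up the expander degrees and ruin the bound, while still forcing the bias below $\eps$, and one must notice the slightly counterintuitive fact that $\Theta(\log\log(1/\eps))$ multiplicative factors, each of size up to $\poly(1/\eps^2)$, combine to only $\poly(1/\eps)$. A minor loose end is citing an explicit expander family realizing degree $\poly(1/\nu)$ on \emph{every} vertex count $|S_i|$, which is standard but not entirely trivial.
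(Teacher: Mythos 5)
Your proposal is correct and takes essentially the same derandomized-squaring approach as the paper: square the set along the edges of an expander, prove a ``$\mu^2+\nu$'' squaring lemma, and iterate until the bias drops below $\eps$, observing that the doubly-exponential decay of the $\mu_i$ makes the total size blowup only $\poly(1/\eps)$. Two genuine differences are worth noting. First, your proof of the squaring lemma is cleaner and more self-contained than the paper's: you split $\tfrac1N A_H=\tfrac1m J+F$ and bound the $F$-term by writing the bilinear form against $F\otimes I$ and invoking unitarity of the $\rho(s_i)$, whereas the paper works with a bipartite expander and routes the proof through a generalized Rayleigh-quotient lemma (Lemma~\ref{lem:expander-operators}) developed separately in the appendix. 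The two statements are quantitatively the same, but your argument is more direct and avoids the bipartite double cover. Second, you assume the existence of an explicit $\poly(1/\nu)$-degree expander family on every vertex count and flag this as a ``minor loose end''; the paper does not make that assumption, and instead takes an LPS Ramanujan graph whose vertex count only approximately matches $|S|$, tiles each side with copies of $S$, and bounds the extra error from the uncovered fringe. That tiling is the source of the paper's additional $\eps^{-1}$ factor (hence $\eps^{-5}$ per round and $\eps^{-11}$ overall); a fully explicit family on arbitrary vertex counts would in fact improve the exponent, so your sketch, if instantiated with such a family, is if anything tighter than the paper's. In short: same amplification strategy and same parameter bookkeeping, but a more elementary proof of the key lemma on your side, and a more careful treatment of the expander's vertex count on the paper's side.
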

\noindent
We have made no attempt to improve the exponent of $\eps$ in $|S_\eps|$.

Our approach is similar to~\cite{Rozenman:2005fk}.  Roughly, if $S$ is an $\eps$-biased set on $G$ we can place a degree-$d$ expander graph $\Gamma$ on the elements of $S$ to induce a new set
\[
S \times_\Gamma S \eqdef \{ st \mid \text{$(s, t)$ an edge of $\Gamma$}\} \, .
\]
If $\rho: G \rightarrow \textsf{U}(V)$ is a nontrivial representation of $G$, by
assumption $\|\Exp_{s \in S} \rho(g) \| \leq \eps$. Applying
a natural operator-valued Rayleigh quotient for expander graphs (see
Lemma~\ref{lem:expander-operators} below), we conclude that
\[
\left\|\Exp_{(s,t) \in \Gamma} \rho(s)\rho(t)\right\| = \left\|\Exp_{(s,t) \in \Gamma}
\rho(st)\right\| \leq \lambda(\Gamma) + \eps^2 \, .
\]
If $\Gamma$ comes from a family of Ramanujan-like expanders, then
$\lambda(\Gamma) = \Theta(1/\sqrt{d})$, and we can guarantee that
$\lambda(\Gamma) = O(\eps^2)$ by selecting $d = \Theta(\eps^{-4})$.
The size of the set then grows by a factor of $|S \times_\Gamma S| /
|S| = d = \Theta(\eps^{-4})$.  We make this precise in
Lemma~\ref{lem:amplify} below, which regrettably loses an additional
factor of $\eps^{-1}$.

Preparing for the proof of Theorem~\ref{thm:amplify}, we record some
related material on expander graphs.

\paragraph{Expanders and derandomized products}

For a $d$-regular graph $G = (V, E)$, let $A$
denote its normalized adjacency matrix: 
$A_{uv} = 1/d$ if $(u,v) \in E$ and $0$ otherwise.  
Then $A$ is stochastic, normal, and has operator norm $\norm{A} =
1$; the uniform eigenvector $\vec{y^+}$ given by $y^+_s = 1$ for all $s \in V$ 
has eigenvalue $1$.  When $G$ is connected, the
eigenspace associated with $1$ is spanned by this eigenvector, and
all other eigenvalues lie in $[-1, 1)$.

Bipartite graphs will play a special role in our analysis. We write a
bipartite graph $G$ on the bipartition $U, V$ as the tuple $G = (U, V; E)$. In
a regular bipartite graph, we have $|U| = |V|$ and $-1$ is an eigenvalue of
$A$ associated with the eigenvector $\vec{y^-}$ which is $+1$ for $s \in U$ and $-1$ for $s \in V$. 
When $G$ is connected, the eigenspace associated with $-1$
is one-dimensional, and all other eigenvalues lie in $(-1, 1)$: we let 
$\lambda(G) < 1$ be the leading nontrivial eigenvalue:
\[
\lambda(G) = \sup_{\vec{y} \; \perp\; {\vec{y^{\pm}}}} {\| M\vec{y} \|}/{\| \vec{y} \|} \, .
\]
When $\vec{y} \perp \vec{y^{\pm}}$, observe that $|{\langle \vec{y}, M\vec{y} \rangle}| \leq \| \vec{y} \| \cdot \| M\vec{y}
\| \leq \lambda \| \vec{y}\|^2$ by Cauchy-Schwarz.

We say that a $d$-regular, connected, bipartite graph $G = (U, V; E)$
for which $|U| = |V| = n$ and $\lambda(G) \leq \Lambda)$ is a \emph{bipartite $(n, d, \Lambda)$-expander}.  A well-known consequence of expansion is that the ``Rayleigh quotient'' determined by the expander is bounded: for
any function $f: U \cup V \rightarrow \R$ defined on the vertices of a
$(n, d, \lambda)$ expander for which $\sum_{u \in U} f(u) = \sum_{v \in V} f(v) = 0$,
\[
\Exp_{(u,v) \in E} f(u)f(v) \leq \lambda \|f\|_2^2 \, .
\]
We will apply a version of this property pertaining to operator-valued functions.

\begin{lemma}
\label{lem:expander-operators} 
Let $G = (U , V; E)$ be a bipartite $(n, d, \lambda)$-expander.  Associate with each vertex $s \in U \cup V$ a linear operator $X_s$ on the vector space $\CC^d$ such that $\norm{X_s} \leq 1$, $\bnorm{ \Exp_{u \in U} X_u } \leq \eps_U$, and $\bnorm{ \Exp_{v \in V} X_v } \leq \eps_V$.  Then
  \[
  \Bnorm{ \Exp_{(u,v) \in E} X_u X_v } \leq \lambda + (1 - \lambda) \eps_U \eps_V \, .
  \]
\end{lemma}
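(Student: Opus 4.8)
The plan is to mimic the scalar Rayleigh-quotient argument for expanders, but carried out at the level of operators by testing against unit vectors. Fix unit vectors $\xi,\eta \in \CC^d$; it suffices to bound $\bigl|\langle \xi, (\Exp_{(u,v)\in E} X_u X_v)\,\eta\rangle\bigr|$ uniformly over all such $\xi,\eta$, since the operator norm of $\Exp_{(u,v)} X_u X_v$ is the supremum of this quantity. Expanding, $\langle \xi, X_u X_v \eta\rangle = \langle X_u^\dagger \xi, X_v \eta\rangle$, so if I define vector-valued functions $a(u) \eqdef X_u^\dagger \xi$ on $U$ and $b(v) \eqdef X_v \eta$ on $V$, the object I want to control is $\Exp_{(u,v)\in E} \langle a(u), b(v)\rangle$. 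Note $\norm{a(u)} \le 1$ and $\norm{b(v)} \le 1$ for all vertices because $\norm{X_s}\le 1$.

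Next I would split each function into its mean part and its mean-zero part: write $\bar a = \Exp_{u\in U} a(u)$, $\bar b = \Exp_{v\in V} b(v)$, and $a_0(u) = a(u) - \bar a$, $b_0(v) = b(v) - \bar b$, so $\Exp_u a_0(u) = 0 = \Exp_v b_0(v)$. Bilinearity gives
\[
\Exp_{(u,v)\in E}\langle a(u),b(v)\rangle = \langle \bar a,\bar b\rangle + \Exp_{(u,v)\in E}\langle a_0(u), b_0(v)\rangle,
\]
using that the cross terms vanish: $\Exp_{(u,v)\in E}\langle a_0(u),\bar b\rangle = \langle \Exp_u a_0(u),\bar b\rangle = 0$ by $d$-regularity, and similarly for the other cross term. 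The first term is bounded by $\norm{\bar a}\,\norm{\bar b} = \bnorm{\Exp_u X_u^\dagger \xi}\,\bnorm{\Exp_v X_v \eta} \le \eps_U \eps_V$, since $\bnorm{\Exp_u X_u^\dagger} = \bnorm{(\Exp_u X_u)^\dagger} = \bnorm{\Exp_u X_u} \le \eps_U$. For the second term, I apply the expander mixing inequality coordinatewise: expanding $\langle a_0(u),b_0(v)\rangle = \sum_{\ell} \overline{a_0(u)_\ell}\, b_0(v)_\ell$ (or separating real and imaginary parts), each scalar function $u\mapsto a_0(u)_\ell$ and $v\mapsto b_0(v)_\ell$ has zero sum over its side, so the operator-free Rayleigh bound gives $|\Exp_{(u,v)\in E} a_0(u)_\ell b_0(v)_\ell| \le \lambda \|a_{0,\ell}\|_2 \|b_{0,\ell}\|_2$ (after adapting the stated real inequality to complex scalars via Cauchy--Schwarz on the quadratic form of $A$ restricted to the orthogonal complement of $\vec{y}^\pm$). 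Summing over $\ell$ and Cauchy--Schwarz in $\ell$ yields $|\Exp_{(u,v)\in E}\langle a_0(u),b_0(v)\rangle| \le \lambda\, \bigl(\Exp_u\norm{a_0(u)}^2\bigr)^{1/2}\bigl(\Exp_v\norm{b_0(v)}^2\bigr)^{1/2}$. Finally $\Exp_u\norm{a_0(u)}^2 = \Exp_u\norm{a(u)}^2 - \norm{\bar a}^2 \le 1 - \eps_U^2$ is not quite what I want; more carefully, $\Exp_u \norm{a(u)}^2 \le 1$ gives $\Exp_u\norm{a_0(u)}^2 \le 1$, and combining the two terms gives the bound $\eps_U\eps_V + \lambda$. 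To get the sharper $\lambda + (1-\lambda)\eps_U\eps_V$ claimed, I instead keep $\Exp_u\norm{a_0(u)}^2 \le 1 - \norm{\bar a}^2$ and optimize: writing $\alpha = \norm{\bar a}\le \eps_U$, $\beta=\norm{\bar b}\le\eps_V$, the total is at most $\alpha\beta + \lambda\sqrt{(1-\alpha^2)(1-\beta^2)} \le \alpha\beta + \lambda(1-\alpha\beta) \le \eps_U\eps_V + \lambda(1-\eps_U\eps_V)$, where the middle step uses $\sqrt{(1-\alpha^2)(1-\beta^2)}\le 1-\alpha\beta$ (equivalently $(\alpha-\beta)^2\ge 0$) and the last step uses that $t\mapsto t + \lambda(1-t)$ is increasing in $t$.

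The main obstacle I anticipate is the bookkeeping needed to pass from the real-scalar expander mixing lemma quoted in the text to the complex vector-valued statement — in particular, justifying the coordinatewise application cleanly (handling complex entries, and making sure the "$\perp \vec{y}^\pm$" hypothesis is genuinely met by the mean-zero functions on both sides of the bipartition, which it is, since $\vec{y}^+$ and $\vec{y}^-$ together span the span of the indicator-type vectors that detect the two side-averages). Everything else is the standard mean/mean-zero decomposition plus Cauchy--Schwarz, and the slightly delicate part is squeezing out the factor $(1-\eps_U\eps_V)$ rather than settling for the cruder $\lambda + \eps_U\eps_V$, which the elementary inequality $\sqrt{(1-\alpha^2)(1-\beta^2)} \le 1 - \alpha\beta$ handles.
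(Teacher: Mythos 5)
Your proposal is correct and follows essentially the same route as the paper: test against unit vectors to reduce to the vector families $X_u^\dagger\xi$ and $X_v\eta$, split into mean and mean-zero parts (the paper packages this as Lemmas~\ref{lem:expander-vectors} and~\ref{lem:sloppy-expander-vectors}), apply the mixing bound coordinatewise, and recover the $(1-\lambda)\eps_U\eps_V$ factor by the same AM--GM fact, which you phrase as $\sqrt{(1-\alpha^2)(1-\beta^2)}\le 1-\alpha\beta$ and the paper as $\delta_U^2+\delta_V^2\ge 2\delta_U\delta_V$.
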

\noindent
We will sometimes apply Lemma~\ref{lem:expander-operators} to the tensor product of operators.  That is, given the same assumptions, we have
  \[
  \Bnorm{ \Exp_{(u,v) \in E} X_u \otimes X_v } \leq \lambda + (1 - \lambda) \eps_U \eps_V \, .
  \]
To see this, simply apply the lemma to the operators $X_u \otimes \one$ and $\one \otimes X_v$.
  
Critical in our setting is the fact that this conclusion is
independent of the dimension $d$. A proof of this folklore lemma
appears in Appendix~\ref{appendix:expanders}; see
also~\cite{De:Pseudorandomness} for a related application to branching
programs over groups.

\paragraph{Amplification} We return now to the problem of amplifying
$\eps$-biased sets over general groups.

\begin{lemma}
\label{lem:amplify}
Let $S$ be an $\eps$-biased set on the group $G$.  Then there is an $\eps'$-biased set $S'$ on $G$ for which $\eps' \le 5 \eps^2$ and $|S'| \le  C |S| \eps^{-5}$, where $C$ is a universal constant. Moreover, assuming that multiplication can be efficiently implemented in $G$, the set $S'$ can be constructed from $S$ in time polynomial in $|S'|$.
\end{lemma}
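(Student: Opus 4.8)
The plan is to realize the derandomized-squaring step described in the text, using an explicit bipartite Ramanujan-like expander to pair up the elements of $S$, and then to track carefully how the bias, the size, and the construction time evolve. First I would fix an infinite family of bipartite expanders with near-optimal spectral gap—say bipartite Ramanujan graphs, or more practically the bipartite double covers of Lubotzky--Phillips--Sarnak or Margulis graphs, or even a simpler algebraic construction—so that for any desired degree $d$ (from an appropriate infinite set of admissible degrees) there is a bipartite $(|S|, d, \lambda)$-expander $\Gamma$ on the vertex set $S \sqcup S$ with $\lambda = \lambda(\Gamma) = O(1/\sqrt{d})$; since $|S|$ may not match the available sizes exactly, I would pad $S$ by repeating elements (which changes neither its bias nor, up to constants, its size) to reach an admissible vertex count. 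One technical point to handle here: I want $\Gamma$ to be efficiently constructible, i.e.\ given a vertex I can list its neighbors in time $\poly(\log |S|, \log d)$, so that enumerating all $|S| \cdot d$ edges and forming the products takes time $\poly(|S| d)$; the standard algebraic expander families have this property.

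Next I would set $S' \eqdef S \times_\Gamma S = \{\, st : (s,t) \in E(\Gamma)\,\}$, as a multiset, so that $|S'| = |S| \cdot d$ exactly and sampling uniformly from $S'$ is the same as sampling a uniformly random edge $(s,t)$ of $\Gamma$ and outputting $st$. For any nontrivial irrep $\rho : G \to \U(V)$, I then apply Lemma \ref{lem:expander-operators} with $U = V = S$ and the operators $X_s = \rho(s)$ for $s$ on the left side and $X_t = \rho(t)$ on the right side: each satisfies $\norm{X_s} = 1 \le 1$, and by hypothesis $\norm{\Exp_{s \in S} \rho(s)} \le \eps$ on both sides, so $\eps_U = \eps_V = \eps$. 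The lemma yields
\[
\Bnorm{ \Exp_{\bar t \in S'} \rho(\bar t) }
= \Bnorm{ \Exp_{(s,t) \in E(\Gamma)} \rho(s)\rho(t) }
\le \lambda + (1-\lambda)\eps^2
\le \lambda + \eps^2 \, .
\]
Now choose $d$ to be the smallest admissible degree with $\lambda(\Gamma) \le \eps^2$; since $\lambda = O(1/\sqrt{d})$ this requires $d = O(\eps^{-4})$, and because admissible degrees in the chosen family are not too sparse (e.g.\ they include $q+1$ for primes $q$, which have bounded gaps on average, or one can use families with degrees at every power of a fixed base) we get $d \le C_0 \eps^{-4}$ for a universal constant $C_0$. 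This gives bias at most $2\eps^2 \le 5\eps^2$ and size $|S'| = |S| d \le C_0 |S|\eps^{-4}$.

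The only remaining gap is the discrepancy between the $\eps^{-4}$ I just obtained and the $\eps^{-5}$ claimed in the statement—this slack (the ``additional factor of $\eps^{-1}$'' the text warns about) is exactly what absorbs the cost of the padding step and of snapping the vertex count and the degree to admissible values in the expander family: padding $S$ up to the next admissible size can multiply $|S|$ by a constant, and if admissible degrees near $\eps^{-4}$ are only guaranteed to exist within a multiplicative $\poly(1/\eps)$ window one loses another $\poly(1/\eps)$ factor, all comfortably within the stated $|S'| \le C|S|\eps^{-5}$ with $C$ universal. I expect the main obstacle to be precisely this bookkeeping: pinning down a concrete, efficiently-constructible bipartite expander family whose admissible $(n,d,\lambda)$ parameters are dense enough that the rounding losses stay within a single factor of $\eps^{-1}$, and verifying that the padding does not disturb the $\eps$-biased property (it does not, since repeating elements leaves every $\Exp_{s\in S}\rho(s)$ unchanged). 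The efficiency claim is then immediate: constructing $\Gamma$ and enumerating its $|S'|$ edges, each requiring one group multiplication, takes time $\poly(|S'|)$.
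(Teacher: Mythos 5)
Your overall approach—pairing elements of $S$ via an explicit bipartite Ramanujan-like expander, applying Lemma~\ref{lem:expander-operators} to bound the bias of the product set, and choosing degree $d = O(\eps^{-4})$—matches the paper's. But there is a concrete gap in how you handle the mismatch between $|S|$ and the expander's vertex count, and you have misidentified where the extra factor of $\eps^{-1}$ actually comes from.

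You write that you would ``pad $S$ by repeating elements (which changes neither its bias nor, up to constants, its size) to reach an admissible vertex count,'' and later assert that this padding merely multiplies $|S|$ by a constant. Both claims fail as stated. If $n$ is the expander's side size and $n$ is not a multiple of $|S|$, then any multiset of size $n$ built from elements of $S$ is necessarily an uneven weighting of $S$; the resulting averaged operator is $\frac{1}{n}\bigl(\lfloor n/|S|\rfloor\sum_{s\in S}\rho(s) + \sum_{\text{extras}}\rho(s)\bigr)$, whose norm is bounded only by $\eps + |S|/n$, not by $\eps$. To drive this down to $2\eps$ (say) you need $n \ge |S|/\eps$. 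That oversizing by a factor of $\eps^{-1}$—not sparseness of admissible degrees—is what produces the $\eps^{-5}$ in the size bound, and it also means $\eps_U, \eps_V \le 2\eps$ rather than $\eps$, which is why the resulting bias bound is $(2\eps)^2 + \lambda \le 5\eps^2$ rather than $2\eps^2$. In other words, your claimed bias $\le 2\eps^2$ with size $C_0|S|\eps^{-4}$ is not achievable by this route; the padding cost is $\eps^{-1}$, not $O(1)$. (Degree sparseness of the LPS family contributes only a universal constant by Dirichlet's theorem, so your suggestion that the extra $\eps^{-1}$ comes from a $\poly(1/\eps)$ window on admissible degrees is incorrect.) With this correction—tile each side of a bipartite $(n,d,\lambda)$ expander with $n \ge |S|\lceil\eps^{-1}\rceil$ vertices by $\lfloor n/|S|\rfloor$ full copies of $S$, assign the remaining at most $|S|$ (i.e., at most an $\eps$-fraction) of vertices arbitrarily, note that the per-side averaged operator then has norm at most $(1-\eps)\eps+\eps \le 2\eps$, and apply Lemma~\ref{lem:expander-operators}—the argument goes through exactly as in the paper.
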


\begin{proof}
  We proceed as suggested above. The only wrinkle is that we need to
  introduce an expander graph on the elements of $S$ that achieves
  second eigenvalue $\Theta(\eps^2)$.

  We apply the explicit family of Ramanujan graphs due to Lubotzky,
  Phillips, and Sarnak~\cite{Lubotzky:Ramanujan}.  For each pair of
  primes $p$ and $q$ congruent to $1$ modulo $4$, they obtain a graph $\Gamma_{p,q}$ with $p(p^2-1)$ vertices, degree $q+1$, and
  $\lambda(\Gamma_{p,q}) = 2\sqrt{q}/(q+1) < 2/\sqrt{q}$. We treat
  $\Gamma_{p,q}$ as a bipartite graph by taking the double cover: this
  introduces a pair of vertices, $v_\text{A}$ and $v_\text{B}$, for
  each vertex $v$ of $\Gamma_{p,q}$ and introduces an edge $(u_A,
  v_B)$ for each edge $(u,v)$. This graph has eigenvalues $\pm \lambda$ for each eigenvalue $\lambda$ of $\Gamma_{p,q}$, so except for the $\pm 1$ eigenspace the spectral radius is unchanged.

  As we do not have precise control over the number of vertices in
  this expander family, we will use a larger graph and approximately
  tile each side with copies of $S$. Specifically, we select the
  smallest primes $p,q \equiv 1 \pmod{4}$ for which
  \begin{equation}\label{eq:size-degree}
  p(p^2 - 1) > |S|
  \cdot \lceil \eps^{-1}\rceil\quad\text{and}\quad 2/\sqrt{q} \leq
  \eps^2 \, .
  \end{equation}
  We now associate elements of $S$ with the vertices (of each side) of
  $\Gamma = \Gamma_{p,q} = (U, V; E)$ as uniformly as possible;
  specifically, we partition the vertices of $U$ and $V$ into a family
  of blocks, each of size $|S|$; this leaves a set of less than $|S|$
  elements uncovered on each side. Then elements in the blocks are
  directly associated with elements of $S$; the ``uncovered'' elements
  may in fact be assigned arbitrarily. As $|U| = |V| \geq |S|
  \lceil\eps^{-1}\rceil$, the uncovered elements above comprise
  less than an $\eps$-fraction of the vertices. As above, we
  define the set $S \times_{\Gamma} S \eqdef \{ uv \mid (u,v) \in E\}$ (where we
  blur the distinction between a vertex and the element of $S$ to
  which it has been associated).

  Consider, finally, a nontrivial representation $\rho$ of $G$.  As the
  average over any block of $U$ or $V$ has operator norm no more
  than $\eps$, and we have an $\eps$-fraction of uncovered
  elements, the average of $\rho$ over each of $U$
  and $V$ is no more than $(1 - \eps)\eps + \eps \leq 2 \eps$. Applying Lemma~\ref{lem:expander-operators}, we conclude
  that $\| \Exp_{s \in S \times_\Gamma S} \rho(s) \| \leq
  (2\eps)^2 + \lambda(\Gamma) \leq 5 \eps^2$ by our choice of
  $q$ (the degree less one).

By Dirichlet's theorem on the density of primes in arithmetic progressions, $p$ and $q$ need be no more than (say) a constant factor larger than the lower bounds $p(p^2-1) > |S| \eps^{-1}$ and $q \ge 4 \eps^{-4}$ implied by~\eqref{eq:size-degree}.  Thus there is a constant $C$ such that 
$|S'| = p(p^2-1)(q+1) \le C |S| \cdot \eps^{-5}$.  
\end{proof}

\paragraph{Remarks} The construction above is saddled with the tasks of identifying appropriate primes $p$ and $q$, and constructing the generators for the associated expander of~\cite{Lubotzky:Ramanujan}.  While these can clearly be carried out in time polynomial in $|S'|$, alternate explicit constructions of expander graphs~\cite{Morgenstern:Existence} can significantly reduce this overhead. However, no known explicit family of Ramanujan graphs appears to provide enough density to avoid the tiling construction above. On the other hand, expander graphs with significantly weaker properties would suffice for the construction: any uniform bound of the form $\lambda \leq c\sqrt{\text{degree}}$ would be enough.

\begin{proof}[Proof of Theorem~\ref{thm:amplify}]
We apply Lemma~\ref{lem:amplify} iteratively.  Set $\eps_0 = 1/10$.  After $t$ applications, we have an $\eps_t$-biased set where $\eps_t = 2^{-2^t} / 5$.  After $t = \lceil \log_2 \log_2 (1/5 \eps) \rceil$ steps, we have $5 \eps^2 \le \eps_t \le \eps$.  The total increase in size is 
\[
\frac{|S_\eps|}{|S|} 
= C^t \left( \prod_{i=0}^{t-1} \eps_i \right)^{\!-5} 
= C^t \left( \frac{2 \eps_t}{5^{t-1}} \right)^{-5}
\le (C/5)^t (50 \eps^2)^{-5} 
= O\big( \eps^{-10} (\log \eps^{-1})^{O(1)} \big)
= O(\eps^{-11}) \, . 
\qedhere
\]
\end{proof}

Combining Theorem~\ref{thm:amplify} with the $\eps$-biased sets constructed in
Section~\ref{sec:homogeneous} we establish a family of
$\eps$-biased set over $G^n$ for smaller $\eps$:

\begin{theorem}
  \label{thm:homogeneous}
  Fix a group $G$. There is an $\eps$-biased set in $G^n$ of size
  $O(n \eps^{-11})$ that can be constructed in time
  polynomial in $n$ and $\eps^{-1}$.
\end{theorem}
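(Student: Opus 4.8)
The plan is to combine the two main results already proven: Theorem~\ref{thm:constant}, which supplies an explicit constant-bias set over $G^n$, and Theorem~\ref{thm:amplify}, which amplifies any $1/10$-biased set into an $\eps$-biased set at the cost of a multiplicative factor $O(\eps^{-11})$. The only genuine issue is that Theorem~\ref{thm:constant} produces bias roughly $1 - \Omega(1/\log\log|G|)^2$, which is \emph{not} bounded by $1/10$ unless $|G|$ is large; since $G$ is fixed, however, $\log\log|G|$ is a constant, so this bias is merely some constant $c = c(G) < 1$, not necessarily $\le 1/10$. First I would therefore observe that Theorem~\ref{thm:amplify} is really a statement about amplifying from \emph{any} constant bias: one can run Lemma~\ref{lem:amplify} a constant number of times (depending only on $c$, hence only on $G$) to first drive the bias below $1/10$, absorbing the resulting constant blow-up into the $O(\cdot)$, and then invoke Theorem~\ref{thm:amplify} as stated. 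Alternatively, and more cleanly, I would note that the proof of Theorem~\ref{thm:amplify} already iterates Lemma~\ref{lem:amplify} from $\eps_0 = 1/10$; starting instead from $\eps_0 = c$ changes only the number of rounds before the bias is small, and the round count is still $O(\log\log(1/\eps)) + O_G(1)$, so the size bound remains $O(|S|\,\eps^{-11})$ with the constant now depending on $G$.

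Concretely, the steps are as follows. (1) Instantiate Definition~\ref{def:MZ} with an explicit $\eps_0$-biased set $S$ over $\Z_{|G|}^n$ of size $O(n \cdot \poly(\eps_0^{-1})) = O_G(n)$ for $\eps_0$ a small absolute constant, say $\eps_0 = 1/20$; such sets exist by the known explicit constructions cited in the introduction (\cite{NaorN:Small,AlonBNNR:Construction,DBLP:journals/rsa/AlonGHP92,BenAroyaTS:Constructing}), since $\Z_{|G|}^n$ is an abelian group of the form handled there (or embed $\Z_{|G|}^n$ into $\Z_2^m$ with $m = O(n \log |G|) = O_G(n)$ via a standard reduction). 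Note $|T_S| \le |G| \cdot |S| = O_G(n)$. (2) Apply Theorem~\ref{thm:constant}: $T_S$ is $(1 - \Omega(1/\log\log|G|)^2 + \eps_0)$-biased over $G^n$. Since $G$ is fixed, this bias is a constant $c_G < 1$; choosing $\eps_0$ small enough (a constant depending on $G$) we can ensure $c_G < 1$ strictly, though possibly $c_G > 1/10$. (3) Apply Lemma~\ref{lem:amplify} a constant number $r = r(G)$ of times to reduce the bias from $c_G$ below $1/10$; each application multiplies the size by $\poly$ of a constant, so after this the set $T'$ still has size $O_G(n)$ and is $1/10$-biased. (4) Apply Theorem~\ref{thm:amplify} to $T'$ to obtain an $\eps$-biased set of size $O(|T'| \eps^{-11}) = O_G(n \eps^{-11})$. (5) Track the running time: each of these operations — constructing the abelian small-bias set, forming $T_S$ (requires enumerating $G$ and powering, constant work per element), and the expander-based amplification steps (polynomial in the output size by the ``moreover'' clauses) — runs in time polynomial in $n$ and $\eps^{-1}$, with constants depending on $G$.

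The main obstacle, such as it is, is purely bookkeeping: making sure the ``constant depending on $G$'' is handled honestly and that the intermediate bias $c_G$ from Theorem~\ref{thm:constant} is genuinely $< 1$ rather than merely $\le 1$. For that I would pin down $\eps_0$ explicitly — e.g. take $\eps_0 = \tfrac12 \cdot \Omega(1/\log\log|G|)^2$, which is a fixed positive constant once $G$ is fixed, giving intermediate bias $c_G \le 1 - \tfrac12\Omega(1/\log\log|G|)^2 < 1$ — and then the number of amplification rounds needed to cross below $1/10$ is $O(\log(1/(1-c_G))) = O_G(1)$. I expect no analytic difficulty beyond this; the substantive content is entirely contained in Theorems~\ref{thm:constant} and~\ref{thm:amplify}, and this theorem is their composition. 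I would close by remarking that the hidden constant's dependence on $|G|$ is at worst a fixed polynomial (coming from the $|G|$ factor in $|T_S|$ and the $\poly(\log\log|G|)$ factors), so one could even state a slightly more refined bound such as $\poly(|G|) \cdot n \cdot \eps^{-11}$ if uniformity in $G$ were desired, but for a fixed $G$ the clean statement $O(n\eps^{-11})$ suffices.
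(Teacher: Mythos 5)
Your overall plan is the right one (and is the paper's plan: apply Theorem~\ref{thm:constant} for constant bias, then amplify), but step (3) has a genuine gap. You propose to ``apply Lemma~\ref{lem:amplify} a constant number $r=r(G)$ of times to reduce the bias from $c_G$ below $1/10$.'' This does not work, because Lemma~\ref{lem:amplify} only promises $\eps' \le 5\eps^2$, and that is an \emph{improvement} only when $\eps < 1/5$; when $\eps \ge 1/\sqrt{5}$ the bound is vacuous, and for $1/5 \le \eps < 1/\sqrt{5}$ it is worse than $\eps$. The bias $c_G$ coming out of Theorem~\ref{thm:constant} is a constant that is typically very close to $1$ (it is $1-\Omega(1/\log\log|G|)^2 + \eps_0$), so the recursion $\eps_{t+1} = 5\eps_t^2$ diverges rather than converges. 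Your ``alternative'' phrasing---starting the iteration in Theorem~\ref{thm:amplify} at $\eps_0 = c_G$ instead of $\eps_0 = 1/10$---has the same problem: the closed form $\eps_t = 2^{-2^t}/5$ used there presumes $\eps_0 \le 1/10$, and the recurrence simply does not decrease from a starting point above $1/5$.

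The paper closes exactly this gap by \emph{modifying} the construction inside Lemma~\ref{lem:amplify} rather than reusing it verbatim: it selects a small absolute constant $\alpha$ and takes the expander to have size at least $|S_0|/\alpha$ with $\lambda(\Gamma) \le$ some small constant, so that the tiling error on each side is at most $\alpha + (1-\alpha)\eps$ rather than roughly $\eps$, and the resulting bias is $(\alpha+\eps)^2 + \lambda(\Gamma)$ rather than $5\eps^2$. For $\alpha,\lambda$ tiny constants, $(\alpha+\eps)^2 + \lambda < \eps$ for every constant $\eps < 1$, so iterating a constant number of times (depending on $G$) does bring the bias below $1/10$, at which point Theorem~\ref{thm:amplify} proper takes over. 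The crucial observation you are missing is that the factor of $5$ in Lemma~\ref{lem:amplify} is an artifact of tying the expander size and spectral gap to $\eps$, and for the ``bridging'' phase one must decouple these choices from $\eps$; otherwise the bound $5\eps^2$ cannot cross the fixed point at $\eps = 1/5$. Your steps (1), (2), (4), (5) are fine, and the bookkeeping in $G$ you describe is essentially what the paper does, but step (3) needs the adapted construction (or some other argument that handles bias $\ge 1/5$) rather than a direct citation of Lemma~\ref{lem:amplify}.
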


\begin{proof}
  Alon et al.~\cite{AlonBNNR:Construction} construct a families of
  explicit codes over finite fields which, in particular, offer 
  $\delta$-biased sets over $\Z_p^n$ of size $O(n)$ for any constant
  $\delta$. As $G$ is fixed, applying Theorem~\ref{thm:constant} to
  these sets over $\Z_{|G|}$ with sufficiently small $\delta \approx 1/\log\log |G|$ yields an $\eps_0$-biased set $S_0$ over $G^n$, where
  $\eps_0$ is a constant close to one (depending on the size of $G$
  and the constant $\delta$). We cannot directly apply
  Theorem~\ref{thm:amplify} to $S_0$, as the bias may exceed
  $1/10$. To bridge this constant gap (from $\eps_0$ to $1/10$),
  we apply the construction of the proof of Theorem~\ref{thm:amplify}
  with a slight adaptation. Selecting a small constant $\alpha$, we may enlarge the expander graph to ensure
  that it has size at least $|S_0| (1/\alpha)$; then the resulting error guarantee on each side of the graph bipartition
  is no more than $\alpha + (1 - \alpha)\eps$ and the product set
  has bias no more than $(\alpha + \eps)^2 + \lambda(\Gamma)$.
  This can be brought as close as desired to $\eps^2$ with
  appropriate selection of the constants $\alpha$ and $\lambda(G)$. As
  $\lambda(G)$ is constant, this transformation likewise increases the
  size of the set by a constant, and this
  method can reduce the error to $1/10$, say, with a constant-factor
  penalty in the size of $S_0$. At this point,
  Theorem~\ref{thm:amplify} applies, and establishes the bound of the theorem.
\end{proof}

\section{Inhomogeneous direct products}
\label{sec:direct}

Groups of the form $G = G_1 \times \cdots \times G_n$ appear to frustrate natural attempts to borrow $\eps$-biased sets directly from abelian groups as we did for $G^n$ in Section~\ref{sec:homogeneous}. In this section, we build an $\eps$-biased set for groups of this form  by iterating a construction that takes $\eps$-biased sets on two groups $G_1$ and $G_2$ and stitches them together, again with an expander graph, to produce an $\eps'$-biased set on $G_1 \times G_2$.  In essence, we again use derandomized squaring, but now for the tensor product of two operators rather than their matrix product.

\begin{construction}\label{cons:direct}
  Let $G_1$ and $G_2$ be two groups; for each $i =1,2$, let $S_i$ be
  an $\eps_i$-biased set on $G_i$. We assume that $|S_1| \leq
  |S_2|$. Let $\Gamma = (U, V; E)$ be a
  bipartite $(|S_2|, d, \lambda)$-expander. Associate elements of $V$
  with elements of $S_2$ and, as in the proof of
  Lemma~\ref{lem:amplify}, associate elements of $S_1$ with $U$ as
  uniformly as possible. As above, we order the elements of $U$ and tile them with copies of $S_1$, leaving a collection of no more than $S_1$ vertices ``uncovered''; these vertices are then assigned to an initial subset of $S_1$ of appropriate size. Define $S_1 \otimes_\Gamma S_2
  \subset G_1 \times G_2$ to be the set of edges of $\Gamma$ (realized
  as group elements according to the association above).
\end{construction}

Recall that an irreducible representation $\rho$ of $G_1 \times G_2$ is a tensor prodoct $\rho_1 \otimes \rho_2$, where each $\rho_i$
is an irrep of $G_i$ and $\rho(g_1, g_2) = \rho_1(g_1) \otimes \rho_2(g_2)$.  If $\rho$ is nontrivial, then one or both of $\rho_1$ and $\rho_2$ is nontrivial, and the bias we achieve on $\rho$ will depend on which of these is the case.


\begin{claim}\label{claim:direct-cons}
  Assuming that $|S_1| \leq |S_2|$, the set $S_1 \otimes_\Gamma S_2$
  of Construction~\ref{cons:direct} has size $d|S_2|$ and bias no more than
  \[
  \max\left(\eps_2, \eps_1 + \frac{|S_1|}{|S_2|}, \lambda + \eps_2\left(\eps_1 + \frac{|S_1|}{|S_2|}\right) \right) \, .
  \]
\end{claim}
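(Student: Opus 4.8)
The plan is to fix an arbitrary nontrivial irrep $\rho = \rho_1 \otimes \rho_2$ of $G_1 \times G_2$ and bound $\bnorm{\Exp_{e \in S_1 \otimes_\Gamma S_2} \rho(e)}$ by a three-way case analysis on which of $\rho_1,\rho_2$ is trivial. First I would dispose of the size claim: writing $\Gamma = (U, V; E)$ with $|U| = |V| = |S_2|$ and degree $d$, the edge set has $d|S_2|$ elements, and $S_1 \otimes_\Gamma S_2$ is exactly that edge set (realized as group elements), so $|S_1 \otimes_\Gamma S_2| = d|S_2|$. Next I would set up notation for the labelings from Construction~\ref{cons:direct}: a bijection $\tau \colon V \to S_2$ and a surjective ``tiling'' labeling $\sigma \colon U \to S_1$ for which the set of ``uncovered'' vertices of $U$ has size less than $|S_1|$; an edge $(u,v)$ then contributes the group element $(\sigma(u), \tau(v))$, so $\rho$ evaluated there is $\rho_1(\sigma(u)) \otimes \rho_2(\tau(v))$. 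The one structural fact I would record before the cases is that, in a $d$-regular bipartite graph, every vertex lies on exactly $d$ edges, so the average over $E$ of any operator-valued quantity depending only on the $U$-endpoint equals the average over $u \in U$, and similarly for the $V$-endpoint.

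With that in place, the three cases run as follows. If $\rho_1$ is trivial and $\rho_2$ is nontrivial, then $\rho_1(\sigma(u))$ is the $1 \times 1$ identity, so the edge-average collapses to $\Exp_{v \in V} \rho_2(\tau(v)) = \Exp_{s \in S_2} \rho_2(s)$, which has norm at most $\eps_2$ since $S_2$ is $\eps_2$-biased. If $\rho_2$ is trivial and $\rho_1$ is nontrivial, the edge-average collapses to $\Exp_{u \in U} \rho_1(\sigma(u))$; here I split $U$ into complete blocks, each a copy of $S_1$ and hence averaging to an operator of norm at most $\eps_1$, plus the uncovered vertices, which form a fraction $f < |S_1|/|S_2|$ of $U$ and on which $\rho_1$ has norm at most $1$. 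The triangle inequality gives a bound of $(1-f)\eps_1 + f \le \eps_1 + |S_1|/|S_2|$; denote this quantity $\eps_U$. Finally, if both $\rho_1$ and $\rho_2$ are nontrivial, I would apply the tensor-product form of Lemma~\ref{lem:expander-operators} (valid because one may pass to $X_u \otimes \one$ and $\one \otimes X_v$ on a common space) with $X_u = \rho_1(\sigma(u))$ and $X_v = \rho_2(\tau(v))$: these satisfy $\norm{X_u}, \norm{X_v} \le 1$, and by the computations in the previous two cases $\bnorm{\Exp_{u \in U} X_u} \le \eps_U$ and $\bnorm{\Exp_{v \in V} X_v} \le \eps_2$. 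The lemma then yields an edge-average of norm at most $\lambda + (1-\lambda)\,\eps_U \eps_2 \le \lambda + \eps_2\bigl(\eps_1 + |S_1|/|S_2|\bigr)$. Taking the maximum of the three bounds gives exactly the claimed expression.

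I expect the main obstacle — really the only place demanding care — to be the bookkeeping in the second case: correctly accounting for the uncovered vertices introduced by the tiling step of Construction~\ref{cons:direct}, and confirming that they comprise fewer than $|S_1|$ out of $|U| = |S_2|$ vertices, hence a fraction at most $|S_1|/|S_2|$, so that the averaging bound $\eps_U = \eps_1 + |S_1|/|S_2|$ is the right one to feed into Lemma~\ref{lem:expander-operators} in the third case. The reductions of edge-averages to single-side vertex-averages via $d$-regularity, and the invocation of the operator-valued expander mixing lemma, are then routine.
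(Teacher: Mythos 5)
Your proof is correct and follows essentially the same route as the paper's: the same three-way case analysis on the triviality of $\rho_1$ and $\rho_2$, the same tiling/uncovered-vertex accounting to get the $\eps_1 + |S_1|/|S_2|$ bound on the $U$-side average, and the same invocation of the tensor-product form of Lemma~\ref{lem:expander-operators} for the case where both factors are nontrivial. The extra bookkeeping you introduce (the explicit labelings $\sigma$, $\tau$ and the $d$-regularity observation) is just a more spelled-out version of what the paper leaves implicit.
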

\begin{proof}
  The size bound is immediate. As for the bias, let $\rho = \rho_1 \otimes \rho_2$ be nontrivial.  If
  $\rho_1 = \one$, 
  \begin{equation}\label{eq:trivial-good}
  \Bnorm{ \Exp_{s \in S_1 \otimes_\Gamma S_2} (\rho_1 \otimes \rho_2)(s) }
  = \Bnorm{ \Exp_{v \in V} \rho_2(v) }
  \leq \eps_2 \, ,
  \end{equation}
  as $S_2$ is in one-to-one correspondence with $V$. In
  contrast, if $\rho_2 = \one$, the best we can say is that
  \begin{equation}
  \label{eq:trivial-bad}
  \Bnorm{ \Exp_{s \in S_1 \otimes_\Gamma S_2} (\rho_1 \otimes \rho_2)(s) } 
  = \Bnorm{ \Exp_{u \in U} \rho_1(u) } 
  \leq \left( 1 - \frac{|S_1|}{|S_2|} \right) \eps_1 + \frac{|S_1|}{|S_2|}
  \leq \eps_1 + \frac{|S_1|}{|S_2|}
  \end{equation}
  as in the proof of Lemma~\ref{lem:amplify}. When both $\rho_i$ are
  nontrivial, applying Lemma~\ref{lem:expander-operators} to~\eqref{eq:trivial-good} and~\eqref{eq:trivial-bad} implies that
  \begin{equation}\label{eq:nontrivial}
  \Bnorm{ \Exp_{s \in S_1 \otimes_\Gamma S_2} (\rho_1 \otimes \rho_2)(s) } 
  \leq \lambda + \eps_2 \left( \eps_1 + \frac{|S_1|}{|S_2|} \right) \, ,
  \end{equation}
  as desired.
\end{proof}

Finally, we apply Construction~\ref{cons:direct} to groups of the form
$G_1 \times \cdots \times G_n$.

\begin{theorem}\label{thm:direct}
  Let $G = G_1 \times \cdots \times G_n$.  Then, for any
  $\eps$, there is an $\eps$-biased set in $G$ of
  size $\poly(\max_i |G_i|, n, \eps^{-1})$. Furthermore, the set
  can be constructed in time polynomial in its size.
\end{theorem}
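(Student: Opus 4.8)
The plan is to construct, for fixed $\eps$, a $1/10$-biased set on $G$ of size $m\cdot\poly(n)$ where $m=\max_i|G_i|$, and then to invoke Theorem~\ref{thm:amplify} to bring the bias down to $\eps$ at the cost of a further $\poly(\eps^{-1})$ factor. To get the $1/10$-biased set I would iterate Construction~\ref{cons:direct} along a balanced binary tree with $n$ leaves labelled by $G_1,\dots,G_n$. At each leaf I take the set $G_i$ itself, which is $0$-biased and has size at most $m$. At an internal node, having already built an $\eps_A$-biased set $S_A$ on the sub-product $G_A$ corresponding to its left subtree and an $\eps_B$-biased set $S_B$ on $G_B$ for its right subtree, I apply Construction~\ref{cons:direct} to $S_A$ and $S_B$ to obtain a set on $G_A\times G_B$. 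The two quantities to track up the tree are the bias and the size.

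For the bias I would arrange that the ``uncovered fraction'' in Construction~\ref{cons:direct} stays at $0$; this matters because a nonzero uncovered fraction contributes an additive error to the bias (the term $|S_1|/|S_2|$ in Claim~\ref{claim:direct-cons}) which, over the $\Theta(\log n)$ levels of the tree, must not be allowed to accumulate. Before merging, I would pad each of $S_A,S_B$ by uniform repetition to a common size $N$: uniform repetition leaves every operator average unchanged, so the padded sets inherit the biases $\eps_A,\eps_B$, and $N$ can be taken within a constant factor of $\max(|S_A|,|S_B|)$, chosen so that an explicit constant-degree bipartite $(N,d_0,\lambda)$-expander $\Gamma$ is available (via the double-cover trick from the proof of Lemma~\ref{lem:amplify}). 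With $|S_1|=|S_2|=N=|U|=|V|$ the tiling of $U$ by copies of $S_1$ is exact; inspecting the proof of Claim~\ref{claim:direct-cons}, the term $|S_1|/|S_2|$ there merely bounds the uncovered fraction, which now vanishes, so the merged set has bias at most $\max(\eps_A,\eps_B,\ \lambda+\eps_A\eps_B)\le\max(\beta,\lambda+\beta^2)$ where $\beta$ bounds the two children's biases. Choosing $\lambda$ a small enough constant (say $\lambda\le 1/100$) makes $\beta\mapsto\max(\beta,\lambda+\beta^2)$ send $[0,1/10]$ into itself, so by induction the bias is at most $1/10$ at every node, in particular at the root.

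For the size, each level multiplies the current set size by at most a constant: padding to the common expander-compatible size $N$ costs a constant factor, and Construction~\ref{cons:direct} then multiplies by the (constant) degree $d_0$ of $\Gamma$. Starting from leaf size at most $m$ and iterating over $\lceil\log_2 n\rceil$ levels gives a $1/10$-biased set on $G$ of size $m\cdot 2^{O(\log n)}=m\cdot\poly(n)$; Theorem~\ref{thm:amplify} then yields an $\eps$-biased set of size $O(m\cdot\poly(n)\cdot\eps^{-11})=\poly(\max_i|G_i|,n,\eps^{-1})$. Every step---locating the required primes and expander, forming the padded multisets, forming the edge set $S_A\otimes_\Gamma S_B$, and the final amplification---runs in time polynomial in the size of its output, so the whole procedure is polynomial-time.

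The step I expect to be the real obstacle is not representation-theoretic but the size bookkeeping just described: because the tree is balanced, the two sets merged at a node have comparable sizes, so the tiling built into Construction~\ref{cons:direct} naturally leaves an uncovered fraction of order $1$, and a careless treatment would either let this additive error compound over $\Theta(\log n)$ levels, or---if one instead inflates one side to suppress it---incur a super-constant blowup per level and hence only a quasi-polynomial final bound. The remedy above (pad both sides to a common size matched to an available constant-degree expander so the tiling is exact) requires pinning down explicit constant-degree bipartite expanders of approximately prescribed sizes; an equivalent fix is to tolerate a tiny constant uncovered fraction at each node and re-normalize the bias back to $1/10$ every $O(1)$ levels using Lemma~\ref{lem:amplify}, which again costs only a constant factor per re-normalization and so $\poly(n)$ overall. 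Getting one of these variants completely explicit is the one fiddly ingredient; everything else follows from Claim~\ref{claim:direct-cons} and Theorem~\ref{thm:amplify}.
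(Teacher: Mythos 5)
Your overall plan is the same as the paper's: a balanced binary tree with leaves $G_1,\dots,G_n$ (each used as a $0$-biased set for itself), Construction~\ref{cons:direct} at each internal node, and a final amplification via Theorem~\ref{thm:amplify}; you also correctly identify the central obstacle, namely that the additive $|S_1|/|S_2|$ ``uncovered fraction'' term in Claim~\ref{claim:direct-cons} can be of order $1$ in a balanced tree and must not compound over $\Theta(\log n)$ levels. However, your \emph{primary} fix --- pad $S_A,S_B$ by uniform repetition to a common size $N$ matched to an explicit expander, so that the uncovered fraction is exactly $0$ --- has a gap. Exact uniform repetition of $S_A$ to size $N$ requires $|S_A|\mid N$, and similarly $|S_B|\mid N$, hence $\operatorname{lcm}(|S_A|,|S_B|)\mid N$; since the lcm can be as large as $|S_A|\cdot|S_B|$, you cannot in general pick $N$ within a constant factor of $\max(|S_A|,|S_B|)$ while keeping both repetitions exactly uniform. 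Any approximate padding leaves a residual uncovered fraction, reintroducing the very additive error you were trying to kill, so the claim ``the tiling of $U$ is exact'' does not hold.

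Your \emph{alternative} fix --- tolerate a small per-level uncovered fraction and re-normalize every $O(1)$ levels via Lemma~\ref{lem:amplify} --- is sound and genuinely different from what the paper does, but only with explicit constants: Lemma~\ref{lem:amplify} improves the bias only when $\eps<1/5$ (since $\eps'\le 5\eps^2$), so the uncovered fraction per level must be made small enough (by duplicating the larger side a constant number of times) that the accumulated bias between re-normalizations stays comfortably below $1/5$; each re-normalization then costs a constant factor, giving $\poly(n)$ over $\Theta(\log n)$ levels. The paper instead keeps the per-level uncovered fraction at $1/5$ and absorbs the accumulation by a careful \emph{ordering} of the tiling: the uncovered vertices at each stage are filled in coherently with previous stages, so the successive contributions from the $\rho_B=\one$ case form a geometric series $1/5+(1/5)^2+\cdots=1/4$, keeping the bias $\le 1/2$ throughout; the bias is then pushed to $1/10$ and amplified by Theorem~\ref{thm:amplify} at the end. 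Your re-normalization route avoids the somewhat delicate tiling-ordering argument at the cost of a few extra constant-factor applications of Lemma~\ref{lem:amplify}; with the $\eps<1/5$ threshold respected it gives the same $\poly(\max_i|G_i|,n,\eps^{-1})$ bound.
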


\begin{proof}
Given the amplification results of Section~\ref{sec:amplification}, we may focus on constructing sets
  of constant bias. We start by adopting the
  entire group $G_i$ as a $0$-biased set for each $G_i$, and then recursively apply
  Construction~\ref{cons:direct}. This process will only
  involve expander graphs of constant degree, which simplifies the
  task of finding the expander required for
  Construction~\ref{cons:direct}.  In this case, one can construct a
  constant degree expander graph of desired constant spectral gap
  on a set $X$ by covering the vertices of $X$ with a family of
  overlapping expander graphs, uniformizing the degree arbitrarily,
  and forming a small power of the result.  So long as the pairwise
  intersections of the covering expanders are not too small, the
  resulting spectral gap can be controlled uniformly. (This luxury
  was not available to us in the proof of Lemma~\ref{lem:amplify}, since in that setting we required $\lambda$ tending to zero, and insisted on a Ramanujan-like relationship between $\lambda$ and the degree.)
  
The recursive construction proceeds by dividing $G$ into two factors: $A = G_1 \times \cdots \times G_{n'}$ and $B = G_{n'+1} \times \cdots \times G_n$, where $n' = \lceil n/2 \rceil$.  Given small-biased sets $S_A$ and $S_B$, we combine them using Construction~\ref{cons:direct}.  Examining Claim~\ref{claim:direct-cons}, we wish to ensure that $|S_A| / |S_B|$ is a small enough constant. To arrange for this, we assume without loss of generality that $|S_B| \geq |S_A|$ and duplicate $S_B$ five times, resulting in a (multi-)set $S'_B$ such that $|S_A| / |S'_B| \le 1/5$.

Assume that each of the recursively constructed sets $S_A, S_B$ has bias at most $1/4$.  We apply Construction~\ref{cons:direct} to $S_A$ and $S'_B$ with an expander $\Gamma$ of degree $d$ for which $\lambda \leq 1/8$, producing the set $S = S_A \otimes_\Gamma S'_B$.  Ideally, we would like $S$ to also be $1/4$-biased, in which case a set of constant bias and size $\poly(\max_i |G_i|, n)$ would follow by induction.
  
Let $\rho = \rho_A \otimes \rho_B$ be nontrivial, where $\rho_A \in \widehat{A}$ and $\rho_B \in \widehat{B}$.  If $\rho_A = \one$ then, as in~\eqref{eq:trivial-good}, $\bnorm{ \Exp_{s \in S} \rho(s) } \leq 1/4$. Likewise, if both $\rho_A$ and $\rho_B$ are nontrivial, \eqref{eq:nontrivial} gives $\bnorm{ \Exp_{s \in S} \rho(s) } \leq 1/8 + 1/4(1/4 + 1/5) \leq 1/4$.  At first inspection, the case where $\rho_B = \one$ appears problematic, as~\eqref{eq:trivial-bad} only provides the discouraging estimate $\bnorm{ \Exp_{s \in S} \rho(s) } \leq 1/4 + 1/5$.  Thus it seems possible that iterative application of Construction~\ref{cons:direct} could lose control of the error. However, as long as the tiling of $U$, the left side of the expander in Construction~\ref{cons:direct}, is carried out in a way that ensures that the uncovered elements of $U$ are tiled with respect to \emph{previous} stages of the recursive construction, it is easy to check that subsequent recursive appearances of this case can contribute no more than the geometric series $1/5 + (1/5)^2 + \cdots = 1/4$ to the bias. Any following recursive application of the construction in which the representation is nontrivial in both blocks will then drive the error back to $1/4$, as $1/8 + 1/4(1/4 + 1/4) = 1/4$.  (If this case occurs at the last stage of recursion, then $S$ still has bias at most $1/4+1/5 \le 1/2$.)

Recall that for the base case of the induction, we treat each $G_i$ as a $0$-biased set for itself.  Since there are $\log_2 n$ layers of recursion, and each layer multiplies the size of the set by the constant factor $5d$, we end with a $1/2$-biased set $S$ of size at most $(5d)^{\log_2 n} \max_i |G_i| = \poly( \max_i |G_i|, n)$. Finally, applying the amplification of Theorem~\ref{thm:amplify}, after first driving the bias down to $1/10$ as in Theorem~\ref{thm:homogeneous}, completes the proof.
\end{proof}

We note that if the $G_i$ are of polynomial size, then we can use the results of Wigderson and Xiao~\cite{WigdersonX:Derandomizing} to find $\eps$-biased sets of size $O(\log |G_i|)$ in time $\poly(|G_i|)$.  Using these sets in the base case of our recursion then gives a $\eps$-biased set for $G$ of size $\poly(\max_i \log |G_i|, n, \eps^{-1})$.

\section{Normal extensions and smoothly solvable groups}

While applying these techniques to arbitrary groups (even in the case when they have plentiful subgroups) seems difficult, for solvable groups can again use a form of derandomized squaring.  First, recall the derived series: if $G$ is solvable, then setting $G^{(0)}=G$ and taking commutator subgroups $G^{(i+1)} = [G^{(i)},G^{(i)}]$ gives a series of normal subgroups, 
\[
1 = G^{(\ell)} \normal \cdots \normal G^{(1)} \normal G^{(0)} = G \, . 
\]
We say that $\ell$ is the \emph{derived length} of $G$.  
Each factor $G^{(i)}/G^{(i+1)} = A_i$ is abelian, and $G^{(i)}$ is
normal in $G$ for all $i$.  Since $|A_i| \geq 2$, it is obvious that $\ell = O(\log |G|)$.  However, more is true.  The 
\emph{composition series} is a refinement of the derived series where each quotient is a cyclic group of prime order, 
and the length $c$ of this refined series is the \emph{composition length}.  Clearly $c \le \log_2 |G|$.  
Glasby~\cite{glasby} showed that $\ell \le 3 \log_2 c + 9 = O(\log c)$, so $\ell = O(\log \log |G|)$. 

We focus on groups that are \emph{smoothly solvable}~\cite{friedletal}, in the sense that the abelian factors have constant exponent.  (Their definition of smooth solvability allows the factors to be somewhat more general, but we avoid that here for simplicity.)  We then have the following:
\begin{theorem}
\label{thm:solvable}
Let $G$ be a solvable group, 
and let its abelian factors be of the form $A_i = \Z_{p_i}^t$ (or factors of such groups) where $p_i=O(1)$.  Then $G$ possesses an $\eps$-biased set $S_\eps$ of size 
$(\log |G|)^{1+o(1)} \,\poly(\eps^{-1})$.  

\end{theorem}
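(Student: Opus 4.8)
The plan is to isolate a single ``normal extension'' step and apply it recursively along the derived series of $G$, organized as a balanced binary tree. The step is the matrix-product analogue of Construction~\ref{cons:direct}: let $J$ be a group with $H \normal J$ and $Q = J/H$, fix any set-theoretic section $\tau\colon Q \to J$, and suppose we have an $\eps_H$-biased set $S_H \subseteq H$ and an $\eps_Q$-biased set $S_Q \subseteq Q$. Put $\widetilde S_Q = \tau(S_Q) \subseteq J$, take a bipartite $(m,d,\lambda)$-expander $\Gamma = (U,V;E)$ with side length $m$ a chosen factor larger than $\max(|S_H|,|\widetilde S_Q|)$, associate $V$ with $S_H$ and $U$ with $\widetilde S_Q$ by tiling each side with copies of the respective set as in Construction~\ref{cons:direct}, and set $S = \{uv : (u,v)\in E\} \subseteq J$. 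The structural input is Clifford's theorem: the trivial representation of $H$ is fixed by the conjugation action of $J$ on $\widehat H$, so for a nontrivial irrep $\rho$ of $J$ either (i) $\Res_H \rho$ is trivial, whence $\rho$ is inflated from a nontrivial irrep $\bar\rho$ of $Q$, or (ii) every irreducible constituent of $\Res_H \rho$ is a nontrivial irrep of $H$. In case (i), $\rho(uv)$ is a function of $U$ alone (namely $\bar\rho$ applied to the $Q$-image of $u$), and the $U$-marginal of a uniform edge is uniform, so $\bnorm{\Exp_{s\in S}\rho(s)} \le \eps_Q + O(\eta)$, where $\eta$ is the uncovered fraction from tiling. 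In case (ii), Lemma~\ref{lem:expander-operators} applied to $X_u = \rho(u)$ (unitary) and $X_v = \rho(v) = (\Res_H \rho)(v)$, using $\bnorm{\Exp_{v\in V}X_v} \le \eps_H + O(\eta)$ since $\Res_H\rho$ is block-diagonal in nontrivial irreps of $H$, gives $\bnorm{\Exp_{s\in S}\rho(s)} \le \lambda + \eps_H + O(\eta)$. Hence $S$ is $(\max(\eps_Q,\lambda+\eps_H)+O(\eta))$-biased and has size $dm = O(d\max(|S_H|,|S_Q|))$.

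Next I would unwind $G$ along $1 = G^{(\ell)} \normal \cdots \normal G^{(0)} = G$ using this step, but arranged as a balanced binary tree whose leaves are the abelian factors $A_0,\dots,A_{\ell-1}$: the node for a subquotient $G^{(a)}/G^{(b)}$ --- legitimate since each $G^{(i)}$ is normal in $G$ --- is split at the midpoint index $m$ into normal subgroup $G^{(m)}/G^{(b)}$ and quotient $G^{(a)}/G^{(m)}$. The tree has depth $\lceil\log_2\ell\rceil$. Along any root-to-leaf path the bias increases additively by at most $\lambda + O(\eta)$ at each combining step, so choosing $\lambda = \Theta(1/\log\ell)$ and the tiling slacks large enough keeps the bias of the final set below $1/2$; meanwhile the size multiplies by $O(d)$ per step, and with an explicit expander family of the prescribed $\lambda$ one has $d = \Theta(\lambda^{-2}) = (\log\ell)^{O(1)}$, so the blow-up along a path is $d^{O(\log\ell)} = (\log\ell)^{O(\log\ell)}$. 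By Glasby's bound~\cite{glasby}, $\ell = O(\log\log|G|)$, so this is $(\log\log\log|G|)^{O(\log\log\log|G|)} = (\log|G|)^{o(1)}$. Using the balanced tree --- with $O(\log\ell)$ combining steps on any path rather than $\ell$ --- is exactly what makes this blow-up subpolynomial in $\log|G|$; a linear chain would give $d^{\Theta(\ell)} = (\log|G|)^{\Theta(1)}$.

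For the base case, each $A_i$ has constant exponent and hence rank $O(\log|A_i|)$ over the relevant $\Z_{p_i}$, so the explicit codes of Alon et al.~\cite{AlonBNNR:Construction} give an $\eps_0$-biased set for $A_i$ of size $O(\log|A_i|)$ for a small absolute constant $\eps_0$; since $\sum_i \log|A_i| = \log|G|$, the largest of these has size $O(\log|G|)$, which dominates the leaf contribution to $|S|$. The recursion thus yields a set of constant bias and size $(\log|G|)^{1+o(1)}$; bridging the remaining constant gap down to $1/10$ by enlarging one expander, exactly as in the proof of Theorem~\ref{thm:direct}, and then applying the amplification of Theorem~\ref{thm:amplify} at the cost of a $\poly(\eps^{-1})$ factor, produces an $\eps$-biased set of size $(\log|G|)^{1+o(1)}\poly(\eps^{-1})$.

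The main obstacle I anticipate is the coupled control of bias and size: we are forced to let $\lambda \to 0$ --- and hence the degree $d$ to grow --- in order to keep the additive bias loss bounded across the recursion, while simultaneously keeping $d$ raised to the recursion depth within $(\log|G|)^{o(1)}$, which is feasible only because the balanced tree caps the depth at $O(\log\log\log|G|)$. A secondary technical point, inherited from Theorem~\ref{thm:direct}, is that the $U$-side of each expander must be tiled ``respecting previous stages,'' so that a path repeatedly falling into case (i) of the Clifford dichotomy contributes only a convergent geometric series of tiling errors rather than an unbounded sum; the set-theoretic sections $\tau$ exist unconditionally, and their effective computation in a solvable group does not affect the claimed size bound.
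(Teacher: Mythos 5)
Your proposal follows the paper's argument essentially verbatim: the Clifford dichotomy for $H \normal J$ (restricted irrep either all-trivial, hence inflated from $J/H$, or trivial-free), a balanced split of the derived series giving recursion depth $O(\log\ell)$, expander parameter $\alpha=\lambda=\Theta(1/\log\ell)$ so the accumulated bias stays below $1/2$ while the size blow-up per path is $\alpha^{-O(1)\cdot O(\log\ell)} = (\log\ell)^{O(\log\ell)} = (\log|G|)^{o(1)}$, Glasby's bound $\ell=O(\log\log|G|)$, the Alon et al.\ base case, and final amplification via Theorem~\ref{thm:amplify}. The one superfluous element is your closing worry about tiling ``respecting previous stages'' and a geometric series of tiling errors: that device is needed only in Theorem~\ref{thm:direct}, where the tiling slack is a fixed constant iterated $\log n$ times, whereas here the expander is oversized by a factor $\alpha^{-1}$ so the slack on each side is already $O(1/\log\ell)$ and a plain union bound over the $O(\log\ell)$ levels suffices.
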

We deliberately gloss over the issue of explicitness.  However, we claim that if $G$ is polynomially uniform in the sense of~\cite{GenericQFT}, so that we can efficiently express group elements and products as a string of coset representatives in the derived series, then $S_\eps$ can be computed in time polynomial in its size.

\begin{proof}
Solvable groups can be approached via Clifford theory, which controls the structure of representations of a group $G$ when restricted to a normal subgroup.  In fact, we require only a simple fact about this setting.  Namely, if $H \normal G$ and $\rho$ is an irrep of $G$, then either $\Res_H \rho$ contains only copies of the trivial representation so that $\rho(h) = \one_{\rho_d}$ for all $h \in H$, or $\Res_H \rho$ contains \emph{no} copies of the trivial representation.

It is easy to see that the irreps $\rho$ of $G$ for
which $\Res_H \rho$ is trivial are in one-to-one correspondence with
irreps of the group $G/H$, and we will blur this distinction. With this
perspective, it is natural to attempt to assemble an $\eps$-biased set for $G$
from $S_H$, an $\eps_H$-biased set for $H$, and $S_{G/H}$, an
$\eps_{G/H}$-biased set for $G/H$. While $S_H \subset H \subset
G$, there is---in general---no subgroup of $G$ isomorphic to $G/H$, so 
it is not clear how to appropriately embed $S_{G/H}$ into $G$. Happily, we will
see that reasonable bounds can be obtained even with an arbitrary
embedding.  In particular, we treat $S_{G/H}$ as a subset of $G$ by
lifting each element $x \in S_{G/H}$ to an arbitrary element $\hat{x} \in G$ lying in the
$H$-coset associated with $x$.

If $S_H$ and $S_{G/H}$ were the same size, and we could directly
introduce an expander graph $\Gamma$ on $S_H \times S_{G/H}$, then
Lemma~\ref{lem:expander-operators} could still be used to control the
bias of $S = \{ s \hat{t} \mid (s,t) \in \Gamma\}$. Specifically, consider a nontrivial representation $\rho$ of
$G$.  If $\Res_H \rho$ is trivial, then analogous to~\eqref{eq:trivial-good} we have 
$\bnorm{ \Exp_{s \in S} \rho(s) } = \bnorm{ \Exp_{s \in S_{G/H}} \rho(s) } \leq \eps_{G/H}$. 
On the other hand, if $\Res_H
\rho$ restricts to $H$ without any appearances of the trivial
representation, then $\bnorm{ \Exp_{h \in S_H} \rho(h) } \leq \eps_H$. 
In this case, the action of the elements of $S_{G/H}$ on
$\rho$ may be quite pathological, permuting and ``twiddling'' the $H$-irreps 
appearing in $\Res_H \rho$.  
However, as $\norm{\rho(s)} = 1$ (by unitarity) for all $s \in S_{G/H}$, we can conclude from
Lemma~\ref{lem:expander-operators} that 
$\bnorm{ \Exp_{s \in S} \rho(s) } \leq \lambda(\Gamma) + \eps_H$.

We recursively apply the construction outlined above, accounting for the ``tiling error'' of finding an appropriate expander. Specifically, let us inductively assume we have $\epsilon$-biased sets $S^+$ on $G^{+} = G/G^{(k)}$ and $S^-$ on $G^- = G^{(k)}$ for $k = \lceil \ell/2\rceil$, where $\ell$ is the derived length of $G$.  Selecting an expander graph $\Gamma$ of size at least $\alpha^{-1} \max(|S^-|, |S^+|)$ and
$\lambda(\Gamma) \leq \alpha$, for an $\alpha$ to be determined, we 
tile each side of the graph with elements from $S^-$ and $S^+$,
completing them arbitrarily on the ``uncovered elements.'' 
Since at most a fraction $\alpha$ of the elements on either side are uncovered, the average of a nontrivial representation
over either side of the expander has operator norm no more than $\epsilon + \alpha$.
Lemma~\ref{lem:expander-operators} then implies that the bias of the set $S = \{s \hat{t} \mid (s,t) \in \Gamma\}$ is at most 
$\lambda(\Gamma) + (\epsilon + \alpha) \leq \epsilon + 2\alpha$. 
 If we use the Ramanujan graphs of~\cite{Lubotzky:Ramanujan} described above, we can achieve degree $O(\alpha^{-2})$ and size
$O(\alpha \max(|S^-|, |S^+|))$. Thus, each recursive step of this
process scales the sizes of the sets by a factor $O(\alpha^{-3})$ and
introduces additive error $2\alpha$.  The number of levels of recursion is $\lceil \log_2 \ell \rceil$, 
so if we choose $\alpha < 1/(4 \lceil \log \ell \rceil)$ then the total accumulated error is less than $1/2$.   

Assuming that we have $\alpha$-biased sets for each abelian factor $A_i$ of size no more than $s$, this yields a $1/2$-biased set $S$ for $G$ of size $s \alpha^{-3 \log_2 \ell} = s (\log \ell)^{O(\log \ell)}$.  For constant $p$, there are $\alpha$-biased sets for $\Z_p^n$~\cite{AlonBNNR:Construction} of size $s = O(n / \alpha^3) = (\log |G|)(\log \ell)^{O(1)}$.  Using the fact~\cite{glasby} that $\ell = O(\log \log |G|)$, the total size of $S$ is
\[
(\log |G|) (\log \ell)^{O(\log \ell)}
= (\log |G|) (\log \log \log |G|)^{O(\log \log \log |G|)}
= (\log |G|)^{1+o(1)} \, . 
\]
Finally, we amplify $S$ to an $\eps$-biased set $S_{\eps}$ for whatever $\eps$ we desire with Theorem~\ref{thm:amplify}, introducing a factor $O(\eps^{-11})$.
\end{proof}


\section*{Acknowledgments}
We thank Amnon Ta-Shma, Emanuele Viola, and Avi Wigderson for helpful discussions.  This work was supported by NSF grant CCF-1117426 and ARO contract W911NF-04-R-0009.  

\bibliography{eps-biased}

\begin{thebibliography}{17}
\providecommand{\natexlab}[1]{#1}
\providecommand{\url}[1]{\texttt{#1}}
\expandafter\ifx\csname urlstyle\endcsname\relax
  \providecommand{\doi}[1]{doi: #1}\else
  \providecommand{\doi}{doi: \begingroup \urlstyle{rm}\Url}\fi

\bibitem[Alon et~al.(1992{\natexlab{a}})Alon, Bruck, Naor, Naor, and
  Roth]{AlonBNNR:Construction}
N.~Alon, J.~Bruck, J.~Naor, M.~Naor, and R.~M. Roth.
\newblock Construction of asymptotically good low-rate error-correcting codes
  through pseudo-random graphs.
\newblock \emph{IEEE Transactions on Information Theory}, 38\penalty0
  (2):\penalty0 509--516, 1992{\natexlab{a}}.
\newblock \doi{10.1109/18.119713}.

\bibitem[Alon and Roichman(1994)]{AlonR:Random}
Noga Alon and Yuval Roichman.
\newblock Random {Cayley} graphs and expanders.
\newblock \emph{Random Structures and Algorithms}, 5\penalty0 (2):\penalty0
  271--284, 1994.
\newblock \doi{10.1002/rsa.3240050203}.

\bibitem[Alon et~al.(1992{\natexlab{b}})Alon, Goldreich, H{\aa}stad, and
  Peralta]{DBLP:journals/rsa/AlonGHP92}
Noga Alon, Oded Goldreich, Johan H{\aa}stad, and Ren{\'e} Peralta.
\newblock Simple construction of almost k-wise independent random variables.
\newblock \emph{Random Struct. Algorithms}, 3\penalty0 (3):\penalty0 289--304,
  1992{\natexlab{b}}.

\bibitem[Ben-Aroya and Ta-Shma(2013)]{BenAroyaTS:Constructing}
Avraham Ben-Aroya and Amnon Ta-Shma.
\newblock Constructing small-bias sets from algebraic-geometric codes.
\newblock \emph{Theory of Computing}, 9\penalty0 (5):\penalty0 253--272, 2013.
\newblock \doi{10.4086/toc.2013.v009a005}.

\bibitem[Bogdanov and Viola(2010)]{BogdanovV:Pseudorandom}
A.~Bogdanov and E.~Viola.
\newblock Pseudorandom bits for polynomials.
\newblock \emph{SIAM Journal on Computing}, 39\penalty0 (6):\penalty0
  2464--2486, 2010.
\newblock \doi{10.1137/070712109}.

\bibitem[De(2011)]{De:Pseudorandomness}
A.~De.
\newblock Pseudorandomness for permutation and regular branching programs.
\newblock In \emph{2011 IEEE 26th Annual Conference on Computational Complexity
  (CCC)}, pages 221--231, 2011.
\newblock \doi{10.1109/CCC.2011.23}.

\bibitem[Friedl et~al.(2003)Friedl, Ivanyos, Magniez, Santha, and
  Sen]{friedletal}
Katalin Friedl, G{\'a}bor Ivanyos, Fr{\'e}d{\'e}ric Magniez, Miklos Santha, and
  Pranab Sen.
\newblock Hidden translation and orbit coset in quantum computing.
\newblock In \emph{Proceedings of the 35th Annual ACM Symposium on Theory of
  Computing}, pages 1--9, 2003.

\bibitem[Glasby(1989)]{glasby}
S.~P. Glasby.
\newblock The composition and derived lengths of a soluble group.
\newblock \emph{J. Algebra}, 120:\penalty0 406--413, 1989.

\bibitem[Lubotzky et~al.(1988)Lubotzky, Phillips, and
  Sarnak]{Lubotzky:Ramanujan}
A.~Lubotzky, R.~Phillips, and P.~Sarnak.
\newblock Ramanujan graphs.
\newblock \emph{Combinatorica}, 8\penalty0 (3):\penalty0 261--277, 1988.
\newblock \doi{10.1007/BF02126799}.

\bibitem[Meka and Zuckerman(2009)]{Meka:2009fk}
Raghu Meka and David Zuckerman.
\newblock Small-bias spaces for group products.
\newblock In \emph{Approximation, Randomization, and Combinatorial
  Optimization. Algorithms and Techniques}, volume 5687 of \emph{Lecture Notes
  in Computer Science}, pages 658--672. Springer, 2009.
\newblock \doi{10.1007/978-3-642-03685-9_49}.

\bibitem[Moore et~al.(2006)Moore, Rockmore, and Russell]{GenericQFT}
Cristopher Moore, Daniel~N. Rockmore, and Alexander Russell.
\newblock Generic quantum fourier transforms.
\newblock \emph{ACM Transactions on Algorithms}, 2\penalty0 (4):\penalty0
  707--723, 2006.

\bibitem[Morgenstern(1994)]{Morgenstern:Existence}
M.~Morgenstern.
\newblock Existence and explicit constructions of q + 1 regular {Ramanujan}
  graphs for every prime power q.
\newblock \emph{Journal of Combinatorial Theory, Series B}, 62\penalty0
  (1):\penalty0 44--62, 1994.
\newblock \doi{10.1006/jctb.1994.1054}.

\bibitem[Naor and Naor(1993)]{NaorN:Small}
J.~Naor and M.~Naor.
\newblock Small-bias probability spaces: Efficient constructions and
  applications.
\newblock \emph{SIAM Journal on Computing}, 22\penalty0 (4):\penalty0 838--856,
  1993.
\newblock \doi{10.1137/0222053}.

\bibitem[Rosser and Schoenfeld(1962)]{RosserS1962}
{J. Barkley} Rosser and Lowell Schoenfeld.
\newblock Approximate formulas for some functions of prime numbers.
\newblock \emph{Illinois Journal of Mathematics}, 6:\penalty0 64--94, 1962.

\bibitem[Rozenman and Vadhan(2005)]{Rozenman:2005fk}
Eyal Rozenman and Salil Vadhan.
\newblock Derandomized squaring of graphs.
\newblock In \emph{Approximation, Randomization and Combinatorial Optimization.
  Algorithms and Techniques}, volume 3624 of \emph{Lecture Notes in Computer
  Science}, pages 436--447. Springer, 2005.
\newblock \doi{10.1007/11538462_37}.

\bibitem[Schenkman(1975)]{Schenkman:Group}
Eugene Schenkman.
\newblock \emph{Group Theory}.
\newblock R. E. Krieger Pub. Co., 1975.

\bibitem[Wigderson and Xiao(2008)]{WigdersonX:Derandomizing}
Avi Wigderson and David Xiao.
\newblock Derandomizing the {Ahlswede-Winter} matrix-valued {Chernoff} bound
  using pessimistic estimators, and applications.
\newblock \emph{Theory of Computing}, 4\penalty0 (3):\penalty0 53--76, 2008.
\newblock \doi{10.4086/toc.2008.v004a003}.

\end{thebibliography}

\appendix
\section{Quadratic forms associated with expander graphs}
\label{appendix:expanders}

Our goal is to establish the two generalized Rayleigh quotient bounds
described in Lemmas~\ref{lem:sloppy-expander-vectors}
and~\ref{lem:expander-operators}. We begin with the following
preparatory lemma.

\begin{lemma}\label{lem:expander-vectors} Let $G = (U, V; E)$ be a $(n, d, \lambda)$-expander.
  Associate with each vertex $s \in U \cup V$ a vector $\vec{x^s}$ in $\CC^d$ such
  that $\Exp_{u \in U} \vec{x^u} = 0$ and $\Exp_{v \in V} \vec{x^v} = 0$.  Then
  \[
  \left |\Exp_{(u,v) \in E} \langle \vec{x^u}, \vec{x^v}\rangle\right| \leq \lambda \Exp_s \| \vec{x^s}\|^2 \, .
  \]
\end{lemma}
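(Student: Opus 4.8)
The plan is to package the per-vertex data $\{\vec{x^s}\}$ into a single vector living on the tensor product of the vertex space with $\CC^d$, and then to read off the claimed inequality from the scalar spectral estimate $|\langle \vec y, M\vec y\rangle| \le \lambda \|\vec y\|^2$ (for $\vec y \perp \vec{y^{\pm}}$) already recorded above, using the fact that tensoring $M$ with the identity on $\CC^d$ does not change its operator norm. This last point is exactly what makes the bound independent of $d$.

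In detail: let $M$ be the normalized adjacency matrix of $G$ (as in the definition of $\lambda(G)$), let $\{\vec{e_s}\}_{s \in U \cup V}$ be the standard basis of $\CC^{U \cup V}$, and set $\vec X = \sum_{s} \vec{e_s} \otimes \vec{x^s} \in \CC^{U \cup V} \otimes \CC^d$. The hypotheses $\Exp_{u\in U}\vec{x^u} = 0$ and $\Exp_{v\in V}\vec{x^v} = 0$ say precisely that $\vec X$ is orthogonal to $\vec{y^+}\otimes \vec c$ and to $\vec{y^-}\otimes \vec c$ for every $\vec c \in \CC^d$; equivalently $\vec X \in W \otimes \CC^d$, where $W = (\operatorname{span}\{\vec{y^+},\vec{y^-}\})^{\perp}$ is the $M$-invariant subspace on which, by definition, $\|M|_W\|_{\op} = \lambda(G)$. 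Unwinding $M$ as an average over neighbours, and carefully matching the edge normalization $|E| = nd$ against the vertex normalization $|U\cup V| = 2n$, one checks the two identities
\[
\langle \vec X, (M\otimes\one)\vec X\rangle = 2n\,\Re\,\Exp_{(u,v)\in E}\langle \vec{x^u},\vec{x^v}\rangle , \qquad \|\vec X\|^2 = 2n\,\Exp_{s}\|\vec{x^s}\|^2 .
\]
Since $W\otimes\CC^d$ is $(M\otimes\one)$-invariant and $\|(M\otimes\one)|_{W\otimes\CC^d}\|_{\op} = \|M|_W\|_{\op} = \lambda(G)$, Cauchy--Schwarz gives $|\langle \vec X,(M\otimes\one)\vec X\rangle| \le \lambda(G)\|\vec X\|^2$, hence $\bigl|\Re\Exp_{(u,v)\in E}\langle \vec{x^u},\vec{x^v}\rangle\bigr| \le \lambda(G)\,\Exp_s\|\vec{x^s}\|^2$.

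It then remains to replace the real part by the modulus. Because $M\otimes\one$ is Hermitian, the quadratic form above detects only $\Re\Exp_{(u,v)}\langle \vec{x^u},\vec{x^v}\rangle$; to recover the full modulus, apply the bound just proved to the modified family obtained by multiplying every $\vec{x^v}$ with $v \in V$ by a single unit scalar $\omega \in \CC$, chosen so that $\omega \cdot \Exp_{(u,v)\in E}\langle \vec{x^u},\vec{x^v}\rangle$ is real and nonnegative. This modified family again satisfies the hypotheses (the averages over $U$ and over $V$ still vanish) and has the same value of $\Exp_s\|\vec{x^s}\|^2$, while for it the real part of the edge-average equals $\bigl|\Exp_{(u,v)\in E}\langle \vec{x^u},\vec{x^v}\rangle\bigr|$; the lemma follows.

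This argument is not deep, so there is no single hard step; the effort is almost entirely bookkeeping. The two places that require care are (i) getting the normalization constants right in the two displayed identities — keeping straight the factors $n$, $d$, $|E| = nd$, and $|U\cup V| = 2n$ as one passes between averaging over edges and averaging over vertices — and (ii) the observation that $\|M|_W\|_{\op}$ is unchanged upon tensoring with $\one_{\CC^d}$, which is the only conceptual input and is the source of the dimension-independence of the conclusion. The minor subtlety that a Hermitian form controls only the real part is handled by the phase rotation in the previous paragraph.
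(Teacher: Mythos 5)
Your proof is correct and follows essentially the same route as the paper's: both package the per-vertex data into one large object (you use a single tensor vector $\vec{X}\in\CC^{U\cup V}\otimes\CC^d$; the paper decomposes column-by-column into $d$ vectors $\vec{y^k}\in\CC^{2n}$, which is the same decomposition), observe orthogonality to $\vec{y^\pm}$, and apply the scalar spectral bound for the normalized adjacency matrix. Your explicit phase-rotation step to upgrade $|\Re(\cdot)|$ to $|\cdot|$ is a small but genuine improvement in rigor over the paper's treatment, which elides the distinction between the bilinear sum $\sum_k X_{uk}X_{vk}$ and the Hermitian inner product $\inner{\vec{x^u}}{\vec{x^v}}$ in the complex case.
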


\begin{proof}
  Let $X$ denote the $2n \times d$ matrix whose entries are $X_{sk} =
  x^s_k$.  Then the rows of $X$ are the vectors $\vec{x}$; for an column index
  $k \in \{1, \ldots, d\}$, we let $\vec{y^k} \in \CC^{2n}$ denote the
  vector associated with this column:
  $$
  y^k_v = x^v_k \, .
  $$
  Considering that $\sum_u \vec{x^u} = \sum_v \vec{x^v} = 0$, each
  $\vec{y^k}$ is orthogonal to both $\vec{y^{+}}$ and
  $\vec{y^{-}}$.

  The expectation over a random edge $(u,v)$ of
  $\inner{\vec{x^u}}{\vec{x^v}}$ can be written
  \begin{align*}
    \left| \Exp_{(u,v) \in E} \inner{\vec{x^u}}{\vec{x^v}} \right|
    &= \left| \Exp_{(u,v) \in E} \sum_k X_{uk} X_{vk}\right|
    = \left| \sum_k \Exp_{(u,v) \in E} X_{uk} X_{vk}\right|\\
    &= \left| \sum_k \frac{1}{nd} \sum_{(u,v) \in E} x_k^u x_k^v\right|
    = \left| \frac{1}{n} \sum_k \frac{1}{d} \sum_{(u,v) \in E} y_u^k y_v^k \right|\\
    &= \frac{1}{2n} \left| \sum_k \inner{\vec{y^k}}{A \vec{y^k}} \right|
    \le \frac{1}{2n} \sum_k \left| \inner{\vec{y^k}}{A \vec{y^k}}\right|\\
    &\le \frac{\lambda}{2n} \sum_k \norm{\vec{y^k}}^2 =
    \frac{\lambda}{2n} \sum_s \norm{\vec{x^s}}^2
    = \lambda \Exp_s \| \vec{x^s}\|^2\, . \qedhere
  \end{align*}
\end{proof}

\begin{lemma}  \label{lem:sloppy-expander-vectors}
  Let $G = (U, V; E)$ be a $(n, d, \lambda)$-expander.  Associate
  with each vertex $s \in U \cup V$ a vector $\vec{x^s}$ in $\CC^d$
  such that $\|\Exp_{u \in U} \vec{x^u}\| =
  \eps_U$ and $\| \Exp_{v \in V} \vec{x^v}\| = \eps_V$.  Then
  \[
  \left |\Exp_{(u,v) \in E} \langle \vec{x^u}, \vec{x^v}\rangle\right|
  \leq \lambda \left(\Exp_s \| \vec{x^s}\|^2 - \frac{ \eps_U^2}{2}-\frac{\eps_V^2}{2}\right) + \eps_U \eps_V \, .
  \]
\end{lemma}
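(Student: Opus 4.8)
The plan is to reduce Lemma~\ref{lem:sloppy-expander-vectors} to Lemma~\ref{lem:expander-vectors} by recentering the vectors. Write $\vec{a} = \Exp_{u \in U} \vec{x^u}$ and $\vec{b} = \Exp_{v \in V} \vec{x^v}$, so $\norm{\vec{a}} = \eps_U$ and $\norm{\vec{b}} = \eps_V$. Define the centered vectors $\vec{z^u} = \vec{x^u} - \vec{a}$ for $u \in U$ and $\vec{z^v} = \vec{x^v} - \vec{b}$ for $v \in V$; then $\Exp_{u \in U} \vec{z^u} = 0$ and $\Exp_{v \in V} \vec{z^v} = 0$, so Lemma~\ref{lem:expander-vectors} applies to the $\vec{z^s}$.

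First I would expand $\Exp_{(u,v) \in E} \inner{\vec{x^u}}{\vec{x^v}}$ using $\vec{x^u} = \vec{z^u} + \vec{a}$ and $\vec{x^v} = \vec{z^v} + \vec{b}$. This produces four terms: $\inner{\vec{z^u}}{\vec{z^v}}$, $\inner{\vec{z^u}}{\vec{b}}$, $\inner{\vec{a}}{\vec{z^v}}$, and $\inner{\vec{a}}{\vec{b}}$. Taking the expectation over a uniformly random edge: the two cross terms vanish because the edge distribution has uniform marginals on $U$ and on $V$, so $\Exp_{(u,v)\in E}\vec{z^u} = \Exp_{u\in U}\vec{z^u} = 0$ and similarly for $\vec{z^v}$; the last term is the constant $\inner{\vec{a}}{\vec{b}}$, whose modulus is at most $\eps_U \eps_V$ by Cauchy-Schwarz. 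Hence
\[
\abs{ \Exp_{(u,v)\in E} \inner{\vec{x^u}}{\vec{x^v}} }
\leq \abs{ \Exp_{(u,v)\in E} \inner{\vec{z^u}}{\vec{z^v}} } + \eps_U \eps_V
\leq \lambda \, \Exp_s \norm{\vec{z^s}}^2 + \eps_U \eps_V \, ,
\]
where the second inequality is Lemma~\ref{lem:expander-vectors}.

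It remains to bound $\Exp_s \norm{\vec{z^s}}^2$ in terms of $\Exp_s \norm{\vec{x^s}}^2$. Since $U$ and $V$ each carry half the vertices, $\Exp_s \norm{\vec{z^s}}^2 = \tfrac12 \Exp_{u\in U}\norm{\vec{x^u} - \vec{a}}^2 + \tfrac12 \Exp_{v\in V}\norm{\vec{x^v} - \vec{b}}^2$. For each side, the standard bias-variance identity gives $\Exp_{u \in U}\norm{\vec{x^u} - \vec{a}}^2 = \Exp_{u\in U}\norm{\vec{x^u}}^2 - \norm{\vec{a}}^2 = \Exp_{u\in U}\norm{\vec{x^u}}^2 - \eps_U^2$, because $\vec{a}$ is exactly the mean of the $\vec{x^u}$ (the cross term $-2\Re\inner{\Exp_u \vec{x^u}}{\vec{a}} = -2\norm{\vec{a}}^2$ combines with $+\norm{\vec{a}}^2$). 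Averaging the two sides yields $\Exp_s \norm{\vec{z^s}}^2 = \Exp_s \norm{\vec{x^s}}^2 - \tfrac{\eps_U^2}{2} - \tfrac{\eps_V^2}{2}$. Substituting this into the displayed inequality gives exactly the claimed bound. There is no real obstacle here; the only point requiring a moment's care is confirming that the edge distribution of a regular bipartite graph has uniform marginals on each side, which is immediate from $d$-regularity, and keeping the factor of $\tfrac12$ straight because $s$ ranges over $U \cup V$ with $|U| = |V| = n$.
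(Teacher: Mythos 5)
Your proof is correct and follows essentially the same route as the paper's: recenter the vectors on each side, kill the cross terms using the uniform marginals of the edge distribution, bound $\abs{\inner{\vec{a}}{\vec{b}}}$ by Cauchy--Schwarz, apply Lemma~\ref{lem:expander-vectors} to the centered vectors, and use the bias--variance identity to convert $\Exp_s \norm{\vec{z^s}}^2$ into $\Exp_s \norm{\vec{x^s}}^2 - \eps_U^2/2 - \eps_V^2/2$. No gaps.
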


\begin{proof}[Proof of Lemma~\ref{lem:sloppy-expander-vectors}]
Let $\vec{x}^U = \Exp_{u \in U} \vec{x^u}$ and $\vec{x}^V = \Exp_{v
  \in V} \vec{x^v}$. We have
\begin{align*}
  \left |\Exp_{(u,v) \in E} \langle \vec{x^u}, \vec{x^v}\rangle\right|
  = \left |\Exp_{(u,v) \in E} \langle (\vec{x^u} - \vec{x^U}) +
    \vec{x^U}, (\vec{x^v} - \vec{x^V}) + \vec{x^V}\rangle\right|
\end{align*}
which we may further expand into
\begin{equation}
\label{eq:exp}
\left |\Exp_{(u,v) \in E} \langle (\vec{x^u} - \vec{x^U}),
    (\vec{x^v} - \vec{x^V})\rangle
+ \Exp_{(u,v) \in E} \langle \vec{x^U},
    (\vec{x^v} - \vec{x^V})\rangle
+ \Exp_{(u,v) \in E} \langle (\vec{x^u}-\vec{x^U}),
   \vec{x^V}\rangle + \Exp_{(u,v) \in E} \langle \vec{x^U},
    \vec{x^V}\rangle\right| \, .
\end{equation}
As $G$ is regular, the vertices of a uniformly random edge $(u,v)$
are individually uniform on $U$ and $V$, from which it follows that
the two middle terms of~\eqref{eq:exp} are both zero. Hence we
conclude that
\[
\left|\Exp_{(u,v) \in E} \langle \vec{x^u}, \vec{x^v}\rangle\right|
\leq \left|\Exp_{(u,v) \in E} \langle (\vec{x^u} - \vec{x^U}),
    (\vec{x^v} - \vec{x^V})\rangle\right| + 
\bigl|\langle \vec{x^U},
    \vec{x^V}\rangle\bigr| \, .
\]

Applying Lemma~\ref{lem:expander-vectors}
to the the vectors
\[
\vec{x^u} -\vec{x^U} \quad \text{and}\quad
\vec{x^v} -\vec{x^V} \, ,
\] 
we conclude that
\[
\left|\Exp_{(u,v) \in E} \langle (\vec{x^u} - \vec{x^U}),
    (\vec{x^v} - \vec{x^V})\rangle\right| \leq \frac{\lambda}{2n}\left(\sum_u\norm{\vec{x^u}-\vec{x^U}}^2+
    \sum_v\norm{\vec{x^v}-\vec{x^V}}^2\right) \, .
\]    
    
    The summation $\sum_u\norm{\vec{x^u}-\vec{x^U}}^2$ can be calculated as follows.
\begin{align*}
  \sum_u\norm{\vec{x^u}-\vec{x^U}}^2 
  &= \sum_u\inner{\vec{x^u}-\vec{x^U}}{\vec{x^u}-\vec{x^U}}\\
  &= \sum_u(\inner{\vec{x^u}}{\vec{x^u}}-\inner{\vec{x^u}}{\vec{x^U}}-\inner{\vec{x^U}}{\vec{x^u}}+\inner  {\vec{x^U}}{\vec{x^U}})\\
  &= \sum_u\inner{\vec{x^u}}{\vec{x^u}}-n\inner{\vec{x^U}}{\vec{x^U}}-n\inner{\vec{x^U}}{\vec{x^U}}+n\inner{\vec{x^U}}{\vec{x^U}}\\
  &= \sum_u\norm{\vec{x^u}}^2-n\norm{\vec{x^U}}^2\\
  &= \sum_u\norm{\vec{x^u}}^2-n\eps_U^2 \, . 
\end{align*}

Therefore, 
\begin{align*}
\left|\Exp_{(u,v) \in E} \langle (\vec{x^u} - \vec{x^U}),
    (\vec{x^v} - \vec{x^V})\rangle\right| &\leq
  \frac{\lambda}{2n}\left(\sum_u\norm{\vec{x^u}}^2-n\eps_U^2+\sum_v\norm{\vec{x^v}}^2-n\eps_V^2\right)\\
  &\leq \lambda \left(\Exp_s \| \vec{x^s}\|^2 - \frac{ \eps_U^2}{2}-\frac{\eps_V^2}{2}\right) \, .
\end{align*}
By Cauchy-Schwarz, we have  $|\langle \vec{x^U},
    \vec{x^V}\rangle| \leq \eps_U \eps_V$. 
In total, then,
\[
\left|\Exp_{(u,v) \in E} \langle \vec{x^u}, \vec{x^v}\rangle\right|
\leq \lambda \left(\Exp_s \| \vec{x^s}\|^2 - \frac{ \eps_U^2}{2}-\frac{\eps_V^2}{2}\right) + \eps_U\eps_V \, ,
\]
as desired.
\end{proof}

\begin{lemma-restatement}[Restatement of Lemma~\ref{lem:expander-operators}] Let $G = (U \cup V, E)$ be a $(n, d, \lambda)$-expander.
  Associate with each vertex $s \in U \cup V$ a linear operator $X_s$
  on the vector space $\CC^d$ such
  that $\| X_s\| \leq 1$, $\|\Exp_{u \in U} X_u\| = \eps_U$, and $\|\Exp_{v \in V} X_v\| = \eps_V$.  Then
  \[
  \left\| \Exp_{(u,v) \in E} X_u X_v \right\| \leq \lambda  + (1
  - \lambda)\eps_U\eps_V \, .
  \]
\end{lemma-restatement}

\begin{proof}[Proof of Lemma~\ref{lem:expander-operators}]
Let $X$ denote the linear operator $\Exp_{(u,v) \in E} X_u X_v$. Writing
\[
\| X \| = \max_{\substack{\|\vec{x}\|=1,\\ \|\vec{y}\|=1}} | \langle \vec{x},
X\vec{y}\rangle| \, ,
\]
we observe that
\[
\langle \vec{x},
X\vec{y}\rangle = \left\langle \vec{x}, \Exp_{(u,v) \in E} X_u X_v
\vec{y}\right\rangle = \Exp_{(u,v) \in E} \langle X_u^\dagger \vec{x}, X_v
\vec{y}\rangle \, .
\]
Considering the bounds on $\Exp_u X_u$ and $\Exp_v X_v$, it follows
that $\|\Exp_u X_u^\dagger \vec{x}\| \leq \eps_U$ and $\|\Exp_v
X_v \vec{y}\| \leq \eps_V$; applying Lemma~\ref{lem:sloppy-expander-vectors} with the vector family
$\vec{x^u} = X_u^\dagger \vec{x}$ and $\vec{x^v} = X_v \vec{y}$ we
conclude that
\begin{align*}
|\langle \vec{x},
X\vec{y}\rangle|  &\leq \max_{\substack{\delta_U \leq \eps_U\\
    \delta_V \leq \eps_V}} \lambda \left(\Exp_s \| \vec{x^s}\|^2 - \frac{ \delta_U^2}{2}-\frac{\delta_V^2}{2}\right) +
\delta_U\delta_V\\
&\leq \max_{\substack{\delta_U \leq \eps_U\\
    \delta_V \leq \eps_V}} \lambda \left(1 - \delta_U\delta_V\right) +
\delta_U\delta_V \leq \lambda +
(1 - \lambda)\eps_U\eps_V
\end{align*}
as $\delta_U^2 + \delta_V^2 \geq 2 \delta_U \delta_V$.
\end{proof}

\section{A tail bound for products of operator-valued random variables}
\label{app:tail}

Our goal is to establish the following tail bound (a restatement and
expansion of
Theorem~\ref{thm:operator-avg}).

\begin{theorem-restatement}[Restatement of Theorem~\ref{thm:operator-avg}]
  Let $\positive(H)$ denote the cone of positive operators on the
  Hilbert space $H$ and let $P_1,\ldots, P_k$ be independent random variables taking values
  in $\positive(H)$ for which
  \[
  \norm{P_i} \leq 1 \, ,\quad \text{and}\quad \bnorm{ \Exp[P_i] } \leq 1 - \delta \, .
  \]
  Then for any $\Delta \geq 0$, 
  \[
  \Pr\left[ \bnorm{P_k \cdots P_1} 
  \geq \sqrt{\dim H} \exp\left(-\frac{k\delta}{2} + \Delta\right)\right] 
  \leq \dim H \cdot \exp\left(-\frac{\Delta^2}{2k\ln 2} \right) \, .
  \]
  In particular, choosing $\Delta = k\delta/3$, we conclude that
  \[
  \Pr\left[ \bnorm{ P_k \cdots P_1 } \geq
    \sqrt{\dim H} \exp\left(-\frac{k\delta}{6}\right)\right] \leq
  \dim H \cdot \exp\left(-\frac{k\delta^2}{18\ln 2} \right) \leq
  \dim H \cdot \exp\left(-\frac{k\delta^2}{13} \right) \, . \qedhere
  \]
\end{theorem-restatement}

Recall Azuma's inequality for supermartingales:
\begin{theorem}[Azuma's inequality] Let $X_0, \ldots, X_T$ be a family
  of real-valued random variables for which $|X_i - X_{i-1}| \leq \alpha_i$ and
  $\Exp[X_i \mid X_1, \ldots, X_{i-1}] \leq X_{i-1}$. Then
\[
\Pr[X_T - X_0 \geq \lambda] \leq \exp\left(-\frac{\lambda^2}{2\sum_i
    \alpha_i}\right) \, .
\]
\end{theorem}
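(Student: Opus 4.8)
The plan is to prove Azuma's inequality by the classical exponential-moment method: bound $\Pr[X_T - X_0 \ge \lambda]$ by a Chernoff-style estimate, control the moment generating function of $X_T - X_0$ by telescoping over the conditional moment generating functions of the increments, and bound each conditional MGF by Hoeffding's lemma for bounded random variables.

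First I would set up notation. Write $D_i = X_i - X_{i-1}$ and let $\mathcal{F}_{i-1}$ denote the $\sigma$-algebra generated by $X_0,\dots,X_{i-1}$, so that the hypotheses become $|D_i| \le \alpha_i$ and $\Exp[D_i \mid \mathcal{F}_{i-1}] \le 0$, and $X_T - X_0 = \sum_{i=1}^T D_i$. Fix $s > 0$. By Markov's inequality applied to the nonnegative random variable $e^{s(X_T - X_0)}$,
\[
\Pr[X_T - X_0 \ge \lambda] \le e^{-s\lambda}\, \Exp\!\left[\exp\!\Big(s\sum_{i=1}^T D_i\Big)\right] .
\]
To bound the right-hand side I condition on $\mathcal{F}_{T-1}$: the factor $\exp(s\sum_{i=1}^{T-1}D_i)$ is $\mathcal{F}_{T-1}$-measurable, so the expectation equals $\Exp\big[\exp(s\sum_{i=1}^{T-1}D_i)\cdot \Exp[e^{sD_T}\mid \mathcal{F}_{T-1}]\big]$.

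The key ingredient is Hoeffding's lemma in conditional form: if $Y$ has $Y \in [-\alpha,\alpha]$ almost surely and $\Exp[Y] \le 0$, then $\Exp[e^{sY}] \le \exp(s^2\alpha^2/2)$ for every $s > 0$. This follows by writing $Y = (Y - \Exp Y) + \Exp Y$; the centered variable lies in an interval of length $2\alpha$, so convexity of $t \mapsto e^{st}$ on that interval gives $\Exp[e^{s(Y - \Exp Y)}] \le \exp(s^2(2\alpha)^2/8) = \exp(s^2\alpha^2/2)$, and multiplying by $e^{s\,\Exp Y} \le 1$ (using $s>0$, $\Exp Y \le 0$) preserves the bound. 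Applying this to the regular conditional distribution of $D_T$ given $\mathcal{F}_{T-1}$ — which has the required support and conditional mean — yields $\Exp[e^{sD_T}\mid \mathcal{F}_{T-1}] \le \exp(s^2\alpha_T^2/2)$ pointwise. Iterating this peeling step downward through $i = T-1,\dots,1$ gives $\Exp[\exp(s\sum_i D_i)] \le \exp\big(\tfrac{s^2}{2}\sum_{i=1}^T \alpha_i^2\big)$, hence $\Pr[X_T - X_0 \ge \lambda] \le \exp\big(-s\lambda + \tfrac{s^2}{2}\sum_i \alpha_i^2\big)$. Choosing $s = \lambda/\sum_i \alpha_i^2$ gives $\exp\big(-\lambda^2/(2\sum_i\alpha_i^2)\big)$; since in every application here the increment bounds satisfy $\alpha_i \le 1$ (indeed $\alpha_i = \ln 2$), we have $\sum_i \alpha_i^2 \le \sum_i \alpha_i$ and obtain the stated form $\exp\big(-\lambda^2/(2\sum_i \alpha_i)\big)$ a fortiori.

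There is essentially no serious obstacle: Azuma's inequality is a textbook fact, and the only points requiring care are the conditional version of Hoeffding's lemma and the verification that the telescoping survives the successive conditioning — i.e., that one may legitimately extract the $\mathcal{F}_{i-1}$-measurable prefix from the expectation and then substitute the pointwise bound on $\Exp[e^{sD_i}\mid\mathcal{F}_{i-1}]$. If one prefers the displayed inequality under hypotheses not assuming $\alpha_i \le 1$, the same argument delivers the (generally incomparable, often stronger) bound with $\sum_i \alpha_i^2$ in the denominator, which is what the proof of Theorem~\ref{thm:operator-avg} actually invokes, with $\alpha_i = \ln 2$.
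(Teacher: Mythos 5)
Your proof is correct and is the standard exponential-moment/Hoeffding argument; the paper itself offers no proof of this statement, simply recalling Azuma's inequality as a known fact before applying it in Corollary~\ref{cor:azuma} and Theorem~\ref{thm:operator-avg}. You are also right to flag that the inequality as displayed, with $\sum_i \alpha_i$ rather than $\sum_i \alpha_i^2$ in the denominator, does not hold for arbitrary increment bounds (e.g.\ $T=1$, $X_1=\pm\alpha$ with $\alpha$ large already violates it); it follows from the standard bound only when each $\alpha_i\le 1$, which is satisfied in the paper's sole application where $\alpha_i=\ln 2$, so your a fortiori derivation is exactly the right way to reconcile the stated form with the true theorem.
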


\begin{corollary}\label{cor:azuma}Let $X_0, \ldots, X_T$ be a family
  of real-valued random variables for which $ X_{i-1}  - \alpha_i \leq
  X_i \leq X_{i-1}$ and
  $\Exp[X_i \mid X_1, \ldots, X_{i-1}] \leq X_{i-1} - \eps_i$ for
  some $\eps_i \leq \alpha_i$. Then
\[
\Pr[X_T - X_0 \geq -\sum_i \eps_i  + \lambda] \leq
\exp\left(-\frac{\lambda^2}{2\sum_i \alpha_i}\right) \, .
\]
\end{corollary}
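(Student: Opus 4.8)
The plan is to reduce the corollary to Azuma's inequality as stated above by absorbing the per-step drift into a shifted process. Concretely, I would set $Y_i \eqdef X_i + \sum_{j=1}^i \eps_j$ for $i = 0, \dots, T$, so that $Y_0 = X_0$, and then verify that $Y_0, \dots, Y_T$ satisfies the two hypotheses of Azuma's inequality with the \emph{same} bounds $\alpha_i$. Since each shift $\sum_{j \le i}\eps_j$ is a deterministic constant, $Y_{i-1}$ is a function of $X_{i-1}$ alone, and conditioning on $Y_1,\dots,Y_{i-1}$ carries exactly the same information as conditioning on $X_1,\dots,X_{i-1}$; so no measurability subtlety arises.

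The first step is the supermartingale property. Using the hypothesis $\Exp[X_i \mid X_1,\dots,X_{i-1}] \le X_{i-1} - \eps_i$, one computes
\[
\Exp[Y_i \mid X_1,\dots,X_{i-1}] = \Exp[X_i \mid X_1,\dots,X_{i-1}] + \sum_{j=1}^i \eps_j \le X_{i-1} - \eps_i + \sum_{j=1}^i \eps_j = Y_{i-1} \, ,
\]
which is precisely $\Exp[Y_i \mid Y_1,\dots,Y_{i-1}] \le Y_{i-1}$. The second step is the bounded-increment condition: the hypothesis $X_{i-1} - \alpha_i \le X_i \le X_{i-1}$ gives $\eps_i - \alpha_i \le Y_i - Y_{i-1} \le \eps_i$, and since $0 \le \eps_i \le \alpha_i$ this forces $\abs{Y_i - Y_{i-1}} \le \alpha_i$.

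With both hypotheses in place, Azuma's inequality applied to $Y_0,\dots,Y_T$ yields $\Pr[Y_T - Y_0 \ge \lambda] \le \exp(-\lambda^2/(2\sum_i \alpha_i))$. To finish, I would unwind the substitution: $Y_T - Y_0 = (X_T - X_0) + \sum_i \eps_i$, so the event $\{Y_T - Y_0 \ge \lambda\}$ is identical to $\{X_T - X_0 \ge -\sum_i \eps_i + \lambda\}$, which gives the stated bound verbatim. There is no genuine obstacle here; the one point to keep straight is that the argument implicitly uses $\eps_i \ge 0$, so that the drift is in the claimed direction and the increment bound $\abs{Y_i - Y_{i-1}} \le \alpha_i$ survives the shift — consistent with the intended reading $0 \le \eps_i \le \alpha_i$. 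The ``main obstacle,'' such as it is, is purely bookkeeping: choosing the shift so the drift cancels exactly while preserving the increment bounds.
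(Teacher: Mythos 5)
Your proposal is correct and is exactly the paper's proof: the paper defines $\tilde{X_t} = X_t + \sum_{i \le t} \eps_i$ and applies Azuma's inequality to the shifted process, just as you do. Your write-up merely fills in the routine verifications (supermartingale property, increment bounds, and the implicit $\eps_i \ge 0$) that the paper leaves to the reader.
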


\begin{proof}
  Apply Azuma's inequality to the random variables $\tilde{X_t} = X_t
  + \sum_{i}^t \eps_i$.
\end{proof}

\begin{proof}[Proof of Theorem~\ref{thm:operator-avg}]
  We begin by considering the behavior of the operator $P_k\cdots P_1$
  on a particular vector $\vec{v}$. To complete the proof we will
  select an orthonormal basis $\mathcal{B}$ of $H$.  The operator norm is bounded above by the 
  Frobenius norm, 
  \begin{equation}
    \label{eq:op-inequality}
  \norm{P_k \cdots P_1} 
  \leq \norm{P_k \cdots P_1}_F
  = \sqrt{\sum_{\vec{b} \in \mathcal{B}} \| P_k \ldots P_1\vec{b} \|^2} 
  \le \sqrt{\dim H} \cdot \max_{\vec{b} \in \mathcal{B}} \| P_k \ldots P_1\vec{b}\| \, .
  \end{equation}
  Now fix a unit-length vector $\vec{v} \in H$ and consider the random variables
  \[
  \vec{v_0} = \vec{v}, \quad \vec{v_1} 
  = P_1\vec{v}, \quad \vec{v_2} 
  = P_2 P_1 \vec{v}, 
  \]
  and
  \[
  \ell_i = \begin{cases} {\|\vec{v_i}\|}/{\|\vec{v_{i-1}}\|} & \text{if $\vec{v_{i-1}} \neq 0$,}\\
    0 & \text{otherwise,}
  \end{cases}
  \]
  Our goal is to establish strong tail bounds on the random variable $\| \vec{v_k}\| = \ell_k \ell_{k-1} \ldots \ell_1$.
  Recalling that $\norm{\Exp[P_i]} \leq 1-\eps$ and that the $P_i$ are independent we have 
  \begin{equation}\label{eq:martingale}
  \Exp\left[ \ell_i \;\middle|\; P_1, \ldots,
    P_{i-1}\right] \leq 1 - \eps \, , 
  \end{equation}
  and we proceed to apply a martingale tail bound.

  It will be more convenient to work with log-bounded random
  variables, so we define $m_i = \max(\ell_i, 1/2)$ and observe that
 $\norm{\vec{v_k}} 
 \leq m_k m_{k-1} \ldots m_1$
 and 
  $\ln \| \vec{v_k}\| \leq \sum_i \ln m_i$.  
  Considering that $\max(x,1/2) \leq (1 + x)/2$ for $x \in [0,1]$ we conclude 
  from equation~\eqref{eq:martingale} above that
  $\Exp\left[ m_i \;\middle|\; P_1, \ldots,
    P_{i-1}\right] \leq 1 - \eps/2$.
  Since $1/2 \leq m_i \leq 1$ and $\ln m \le m-1$ for $m > 0$, 
 we have
 \begin{equation}\label{eq:m-martingale}
    \Exp\left[ \ln m_i \;\middle|\; P_1, \ldots,
       P_{i-1}\right] \leq - {\eps}/{2} \, .
  \end{equation}

  Applying Azuma's inequality (specifically, Corollary~\ref{cor:azuma} above) to the random variables 
  $M_t = \sum_{i=1}^t \ln m_i$, we conclude that
  \[
  \Pr\left[ M_k \geq -\frac{k\eps}{2} + \Delta\right] =
  \Pr\left[ \sum_i \ln m_i \geq -\frac{k\eps}{2} +
    \Delta\right] \leq \exp\left(-\frac{\Delta^2}{2k\ln 2} \right)
  \]
  and hence
  \[
  \Pr\left[ \|P_k\cdots P_1 \vec{v}\| \geq \exp\left(-\frac{k\eps}{2} + \Delta\right)\right] \leq \exp\left(-\frac{\Delta^2}{2k\ln 2} \right) \, .
  \]
  Applying the above inequality to an orthonormal basis $\vec{b_1}, \ldots,
  \vec{b_n}$ of $H$, we find that
  \[
  \Pr\left[ \exists i: \bigl\|P_k \cdots P_1 \vec{b_i}\bigr\|_2 \geq
    \exp\left(-\frac{k\eps}{2} + \Delta\right)\right] \leq
  \dim H \cdot \exp\left(-\frac{\Delta^2}{2k\ln 2} \right)
  \]
  by the union bound.  Applying~\eqref{eq:op-inequality} then gives 
  \[
  \Pr\left[ \bigl\|P_k \cdots P_1\bigr\| \geq \sqrt{\dim H}\exp\left(-\frac{k\eps}{2} + \Delta\right)\right] \leq
  \dim H \cdot \exp\left(-\frac{\Delta^2}{2k\ln 2} \right) \, .\qedhere
  \]
 \end{proof}

\end{document}